\def\BibTeX{{\rm B\kern-.05em{\sc i\kern-.025em b}\kern-.08em
    T\kern-.1667em\lower.7ex\hbox{E}\kern-.125emX}}
\definecolor{LightCyan}{rgb}{1,0.8,0.8}
\newtheorem{theorem}{Theorem}
\newtheorem{corollary}{Corollary}
\newtheorem{proposition}{Proposition}
\newtheorem{assumption}{Assumption}
\let\NAT@parse\undefined
\newcommand\copyrighttext{%
  \footnotesize \textcopyright This paper has been accepted for publication in the IEEE Transactions on Robotics. Please cite the paper as: E. Sebasti\'{a}n, E. Montijano, and S. Sag\"{u}\'{e}s,``Adaptive Multi-robot Implicit Control of Heterogeneous Herds'', IEEE Transactions on Robotics (T-RO), 2022.}
\newcommand\copyrightnotice{%
\begin{tikzpicture}[remember picture,overlay]
\node[anchor=south,yshift=10pt] at (current page.south) {\fbox{\parbox{\dimexpr\textwidth-\fboxsep-\fboxrule\relax}{\copyrighttext}}};
\end{tikzpicture}%
}
\title{Adaptive Multi-robot Implicit Control of Heterogeneous Herds}
\author{Eduardo Sebasti\'{a}n, Eduardo Montijano and~Carlos Sag\"{u}\'{e}s\thanks{E. Sebasti\'{a}n, E. Montijano and C. Sag\"{u}\'{e}s are with the RoPeRT group, at DIIS - I3A, Universidad de Zaragoza, Spain.
\texttt{\small \{esebastian, emonti, csagues\}@unizar.es}} 
\thanks{This work has been supported by the ONR Global
grant N62909-19-1-2027, the Spanish projects PGC2018-098817-A-I00 and PGC2018-098719-B-I00 (MCIU/AEI/FEDER, UE), DGA T45-20R, and Spanish grant FPU19-05700.}
}
\begin{document}
\maketitle

\copyrightnotice

\begin{abstract}
This paper presents a novel control strategy to herd groups of non-cooperative evaders by means of a team of robotic herders. In herding problems, the motion of the evaders is typically determined by \textit{strongly nonlinear} and \textit{heterogeneous reactive} dynamics, which makes the development of flexible control solutions a challenging problem. In this context, we propose Implicit Control, an approach that leverages numerical analysis theory to find suitable herding inputs even when the nonlinearities in the evaders' dynamics yield to \textit{implicit equations}. 
The intuition behind this methodology consists in driving the input, rather than computing it, towards the \textit{unknown} value that achieves the desired dynamic behavior of the herd. The same idea is exploited to develop an adaptation law, with stability guarantees, that copes with uncertainties in the herd's models.
Moreover, our solution is completed with a novel caging technique based on uncertainty models and Control Barrier Functions (CBFs), together with a distributed estimator to overcome the need of complete perfect measurements. Different simulations and experiments validate the generality and flexibility of the proposal.
\end{abstract}

\begin{IEEEkeywords}
Adaptive control, control theory, herding, multi-robot systems.
\end{IEEEkeywords}
\IEEEpeerreviewmaketitle


\section{Introduction}\label{sec:intro}

\IEEEPARstart{R}{ecent} advances in Multi-Robot Systems (MRS) have favored the development of successful control strategies in real-life problems such as entrapment~\cite{Antonelli_RAM_2008_Entrapment}, hunting~\cite{Zhu_IJARS_2015_Hunting} or escorting~\cite{Gao_ACCESS_2018_Escorting}. Despite the different nature of scenarios, these problems can all be seen as different instances of \textit{herding}~\cite{Pierson_2018_TR_Herding}, where the objective is to drive a group of targets or \textit{evaders} to specific locations using a team of robots or \textit{herders}. Common to all of them is the non-cooperative nature of the evaders with respect to the control objective, typically entangled in complex nonlinear and heterogeneous behaviors. In fact, the difficulties hidden in the herding problem have motivated broad interdisciplinary research assembling physiologists, mathematicians and neurologists with engineers~\cite{nolfi2002power,Strombom2014Shepherding,Long2020Comprehensive}.  

\begin{figure}[!ht]
\centering
\begin{tabular}{cc}
     \includegraphics[width=0.48\columnwidth,height=0.3\columnwidth]{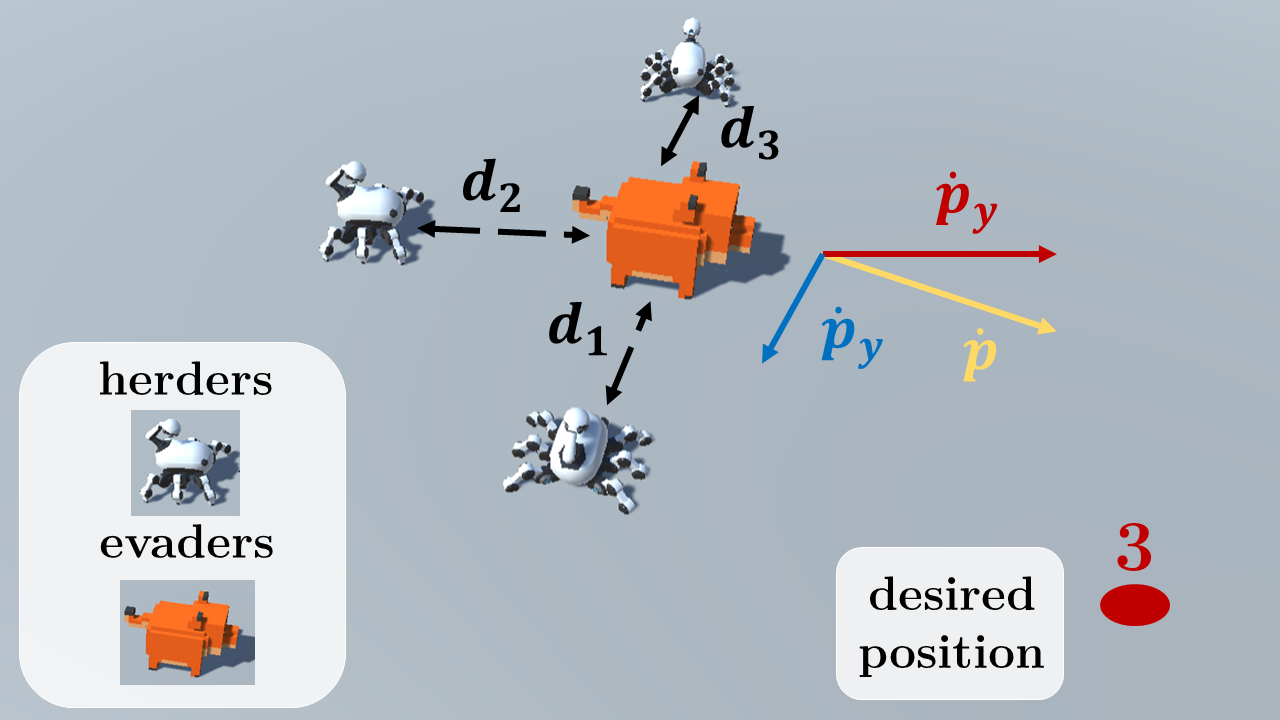}\hspace{0.1cm}
     & 
     \includegraphics[width=0.48\columnwidth,height=0.3\columnwidth]{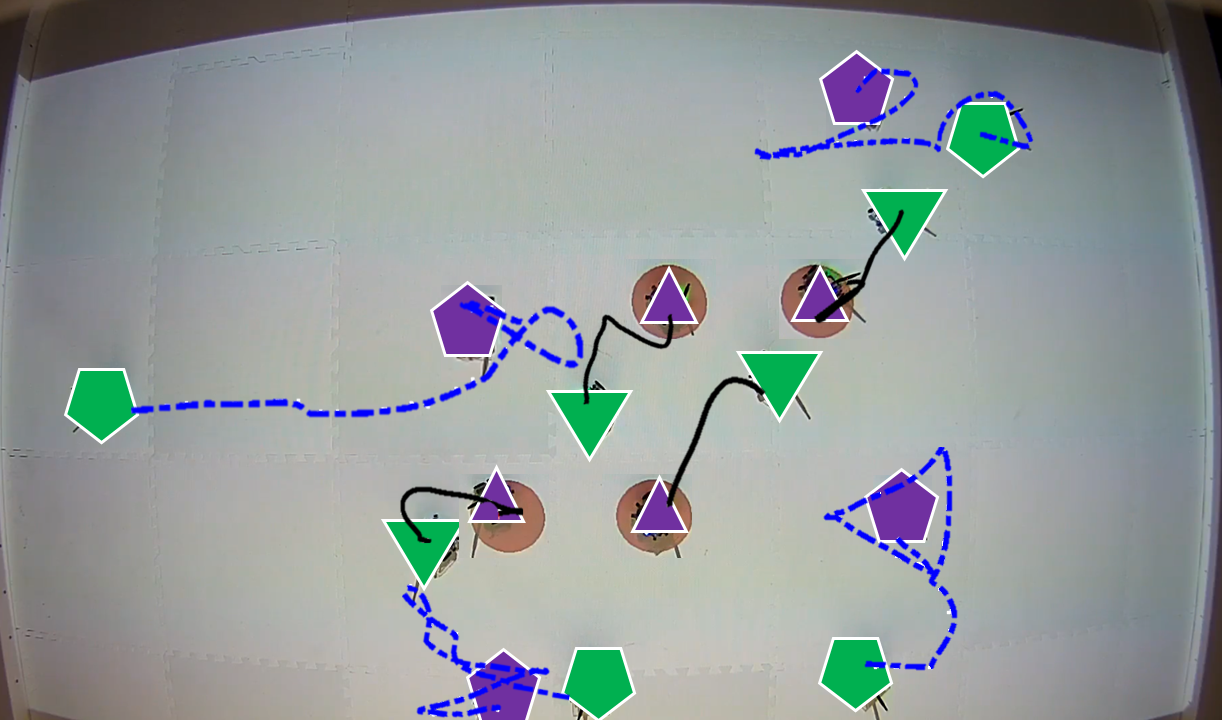}\hspace{0.1cm}
\end{tabular}
    
\caption{In herding, the robotic herders drive the evaders towards specific locations simultaneously. The control strategy exploits repulsive forces (left) to place the evaders in their desired positions. The experiment in the right depicts four robots herding four evaders. The symbols are explained in Table~\ref{table:legend}. The supplementary video includes the complete execution of the experiment.}
\label{fig:first_impression}
\end{figure}

To cope with these difficulties, we present a novel control solution, called \textit{Implicit Control}, that is capable of steering a group of evaders towards \textit{individually} assigned goals \textit{simultaneously}, irrespective of their heterogeneous and non-cooperative nonlinear dynamics.
The proposal leverages numerical analysis theory to derive proofs of existence and stability of the control. Furthermore, these fundamentals are used to develop a general adaptation law that tackles the herding in the presence of uncertainties in the evaders' models. With the inclusion of a distributed estimator and a caging strategy we present a complete, flexible, general herding solution. The flexibility and generality of Implicit Control also allows to steer large herds to a common region or to control different herds individually.

In more detail, the \textit{contributions} of this work are:
\begin{itemize}
    \item A novel control technique, called Implicit Control (Section~\ref{sec:IC}), that solves a precision herding problem in the context of MRS. Implicit Control is able to impose desired dynamics for general input-nonaffine systems. This is done even if some of the parameters are unknown thanks to an adaptation law (Section~\ref{sec:adaptation}) that is derived from the principles of Implicit Control. 
    \item A distributed estimator and a caging technique based on uncertainty models and CBFs (Section~\ref{sec:distributed}) that soften some requirements of the controller. The distributed estimator is a derived design from the Implicit Control: it considers, explicitly, the control proposal to enhance the estimation; besides, it accounts for partial measurements, overcoming the need of complete perfect measurements. With the caging technique the herders encircle the evaders, approaching and surrounding the evaders before they can escape. Simulations (Section~\ref{sec:simulations}) and experiments (Section~\ref{sec:experiments}) with real robots validate the success of the proposal.
\end{itemize}

This paper is an \textit{evolved version} of~\cite{sebastian2020multi}. Compared to it, the theoretical development of Implicit Control is more comprehensive, including a stability analysis for general systems that was not discussed in~\cite{sebastian2020multi}.
The ability to deal with uncertain models using an adaptation law is also new. 
In addition, we formulate an extended version of the Distributed Kalman Filter (E-DKF)~\cite{olfati2007distributed} to overcome the need of complete perfect measurements. 
The filter embeds the Implicit Control strategy in the prediction stage and works with input-dependent measurements, two properties that are atypical and innovative in this context.
Furthermore, the caging technique combining uncertainty models and CBFs is also novel.  We have added new simulated experiments, where we validate the caging technique and the distributed estimator. Furthermore, it is shown that Implicit Control can be exploited to herd several evaders with a few herders by simply changing the control objective.
Finally, we include new experimental results that validate the herding solution. 


\section{Related work}\label{sec:related}

To discuss the related work, we distinguish between (i) single-agent or multi-agent herding solutions, (ii) herding of evaders with linear or nonlinear dynamics, and (iii) herding with known or unknown evaders' dynamics.

Most of the existing works deal with the herding of a \textit{single entity}. An early example is~\cite{Antonelli_RAM_2008_Entrapment}, which employs a Null space-based behavioral control to escort/entrap the target. 
The authors of~\cite{Zhu_IJARS_2015_Hunting} apply a Bio-inspired Neural Network to hunt in an underwater environment. 
To produce an escorting behavior,~\cite{du2017pursuing} develops a distributed switching strategy, where the escorting task is assigned from one herder to another when a Voronoi boundary is crossed. Escorting shares properties with the encirclement in~\cite{franchi2016decentralized}, where decentralized geometrical strategies are derived depending on the evaders' configuration. 

Few papers solve the herding of \textit{more than one evader}. In~\cite{Jahn_ICRA_2017_Surveillance}, a group of robots navigates around a certain area to prevent evaders to cross its boundaries. The work in~\cite{Pierson_2018_TR_Herding} drives groups of entities by an active encirclement but does not consider specific final positions for each evader. Following a similar approach, the authors in~\cite{Panagou_CDC_2019_Herding} go a step further and propose an active encirclement which avoids obstacles. With the same spirit,~\cite{Stilwell_TSMC_2005_Redundant} and~\cite{Antonelli_TRO_2006_Kinematic} develop solutions to drive a whole herd towards a certain region, where it is required that some global features of the herd (e.g., mean position) converge to a desired equilibrium. Besides, both works deal with self-controllable entities, i.e., the controller, state to control, and input are all embedded in the same entity. Recently,~\cite{Auletta_AuRo_2022_Herding} exploits simple local rules to push all herders towards the same containment area. In~\cite{Licitra_CSL_2018_Herding}, a single robot is in charge of herding multiple evaders by controlling them one by one, whereas~\cite{Licitra_TRO_2019_Single} extends the solution with a Neural Network that learns the dynamics of the targets. The herding in~\cite{elamvazhuthi2020controllability} considers a single robot to control a herd as well, but in this case the evaders cooperate with the robotic leader to follow it. The authors of~\cite{song2021herding} present an approach based on algebraic topology called ``herding by caging'', where repulsive forces are leveraged to steer the evaders, again, towards a certain general region. In contrast, our method performs the herding of \textit{all} the evaders to \textit{precise individual locations simultaneously}. Additionally, our approach can also be used to steer large herds towards a global region or to split and steer each sub-herd towards individually assigned regions simultaneously.

Another feature of herding problems is the behavior of the evaders, where two assumptions are often considered: \textit{linear and homogeneous dynamics}. The first assumption is considered in, e.g.,~\cite{Jiang_TAC_2019_Containment}, where the containment of linear heterogeneous agents is performed by a time-varying formation.  
The work in~\cite{Anisi_TAC_2010_Surveillance} drives a group of linear UGVs to ensure that a region is completely surveyed whilst in~\cite{Ramana_CDC_2015_PE} both pursuers and evader have the same linear dynamics, solving the problem with geometric tools. The second assumption is used in~\cite{Pierson_2018_TR_Herding}, where agents with nonlinear homogeneous repulsive dynamics are controlled using a team of robots. A different instance is~\cite{Alexopoulos_ICUAS_2017_PE}, where a complete control structure is presented, from the hex-rotor motion to the pursuit layer. Our proposal offers a general framework to design a control strategy for the herding of \textit{general nonlinear and heterogeneous} evaders.

Most of the aforementioned solutions assume perfect knowledge of the herd dynamics. To deal with uncertainty, a common approach is the design of model-free strategies. The work in~\cite{Franchi_IFAC_2010_Encirclement} localizes the target with relative-position sensors and then applies probabilistic tools for the encirclement. Similarly,~\cite{Zhu_ICPR_2018_PE} includes information from a camera to learn the behavior of the target. 
These instances deal with the herding of a single target. An extension to $n$ robots detecting $n$ targets can be found in~\cite{Desouky_ICSMC_2010_PE2}, using a mixture of learning, optimization and fuzzy techniques. The use of Gene Regulatory Networks is studied in~\cite{Peng_IJARS_2016_Trapping} to entrap groups of targets.
In both cases, no accuracy in the final position of the evaders is required, and in~\cite{Peng_IJARS_2016_Trapping} the targets do not react against the robots' efforts. 
This lack of accuracy and flexibility in the number of evaders under uncertainty is solved in our proposal by an \textit{adaptive technique}. Thus, model information is exploited for the precise herding of repulsive heterogeneous evaders while uncertainty is tackled by adapting online the unknown parameters.

It is not the first time that the term ``implicit control'' appears in the literature, but this is never done in terms of a novel control technique, unlike our coined Implicit Control. In~\cite{Estrada_CDC_1996_Implicit}, the authors deal with a time-varying SISO system by first designing a trivial time-invariant controller and then reformulating it to account for unknown parameters. Therefore, a classical linear control law is reshaped to \textit{implicitly} operate irrespective of the time-varying parameters. Meanwhile, in~\cite{Sueoka_ROBIO_2012_Implicit} and~\cite{Osuka_IROS_2010_Implicit} the term ``Implicit Control'' is inspired by the fast adaptation that emerges in living entities. In such systems, the frontiers between plant and controller are fuzzy. They refer to ``Implicit Control Law'' as that abstract intersection. In contrast, our ``Implicit Control'' is a novel control technique that considers a general \textit{implicit equation} to describe the control objective, and then develops an expression for the time-derivative of the input such that the implicit equation converges towards its solution. That convergence \textit{implicitly} implies that the state converges to the desired equilibrium, with the desired properties encoded in the \textit{implicit equation}.

Lastly, regarding the control technique we propose in this paper, recent works in the control theory community support the advances in developing solutions where the control expression is in terms of the dynamics of the input instead of the classical algebraic expression. In~\cite{Blanchini2017ModelFree}, the time derivative of the input is used to characterize and tune static plants. However, to the best of our knowledge, this work has never been applied in dynamical systems to build an explicit controller. Meanwhile,~\cite{shivam2019tracking} shows the advantages of this novel control perspective in applications with highly nonlinear input-nonaffine dynamics, with an input determined by the ``flow'' that a Newton-Raphson solver will follow to arrive at the roots of a nonlinear equation. Another example is~\cite{wardi2017performance}, where a similar approach is applied in a tracking problem. Our proposal, sustained in~\cite{sebastian2020multi}, goes a step further in generality. We give formal proofs of existence and stability that are then corroborated in both simulations and real experiments, solving a problem where, unlike the aforementioned papers, the control objective is evasive with respect to the inputs. In addition, the dynamics to be controlled are partially unknown, which is also a distinguishing factor.


\section{Problem Statement}\label{sec:prosta}

We consider the problem of herding a group of $m$ evaders using a team of $n$ robotic herders. We denote the evaders by $\textit{j} \in \{ 1,  ... , m \},$ and the herders by $\textit{i} \in \{1, ... , n\}$. Since we aim at using the position of the herders to control the position of the evaders, we define the state $\mathbf{x} \in X \subseteq \mathbb{R}^{2m}$ and the input $\mathbf{u} \in U \subseteq \mathbb{R}^{2n}$ as $\mathbf{x} = \begin{bmatrix}
    \mathbf{x}_{1}^{T} & \dots  & \mathbf{x}_{m}^{T}
    \end{bmatrix}^{T}$ and $\mathbf{u} = \begin{bmatrix}
    \mathbf{u}_{1}^{T} & \dots  & \mathbf{u}_{n}^{T}
    \end{bmatrix}^{T},$ where $\mathbf{x}_{j} = [x_j^e,y_j^e] \in X_j \subseteq \mathbb{R}^{2}$ is the position of evader $j$ and $\mathbf{u}_{i} = [x_i^h,y_i^h] \in U_i \subseteq \mathbb{R}^{2}$ is the position of herder $i$, operating in a 2D space\footnote{The choice of a 2D space is to fit the real experiments, but the solution can be generalized to higher dimensions without changes in the formulation.}. The movement of each evader is described by general dynamics
\begin{equation}
\dot{\mathbf{x}}_j = f_j(\mathbf{x},\mathbf{u}), 
\label{eq:base}
\end{equation}
allowing for any nonlinear behavior encoded in $f_j(\mathbf{x},\mathbf{u})$. Notice that we account for the possibility of nonlinear dependencies on the input. The only assumption regarding $f_j$ is that it is of class $C^1$ for all $j$. We exemplify Eq.~\eqref{eq:base} using two dynamic models from the literature. Their choice is motivated by the strongly nonlinear behavior in the position of evaders and herders, described by input-nonaffine dynamics.

The \textit{Inverse Model} (adapted from \cite{Pierson_2018_TR_Herding}) is
\begin{equation}
\dot{\mathbf{x}}_j = f_j^{inv}(\mathbf{x},\mathbf{u}) = \theta_j\sum_{i=1}^{n} \frac{\mathbf{d}_{ij}}{||\mathbf{d}_{ij}||^{3}}
\label{eq:PiersonBase}
\end{equation}
where $\mathbf{d}_{ij} = \mathbf{x}_{j} - \mathbf{u}_{i}$ is the relative position between evader $j$ and herder $i$, and $\theta_j$ is a positive constant which expresses the aggressiveness in the repulsion provoked by the herders. 
Note that despite the model has a singularity in $\mathbf{d}_{ij}=\mathbf{0}$, in practice the speed remains bounded. The repulsion grows with $\frac{1}{||\mathbf{d}_{ij}||^{3}}$ so the closer the herders, the larger the repulsion. The only way of achieving $\mathbf{d}_{ij}=\mathbf{0}$ is that a perfectly evenly distributed number of herders approaches the evader. This does not happen in practice because herders need to impose non-zero repulsive forces to steer the evaders.

The \textit{Exponential Model} (adapted from \cite{Licitra_CSL_2018_Herding}) is
\begin{equation}
\label{eq:LicitraBase}
\begin{aligned}
\dot{\mathbf{x}}_j =& f_j^{exp}(\mathbf{x},\mathbf{u}) =
\\&
\theta_j\sum_{i=1}^{n}\mathbf{d}_{ij}e^{-\chi_{ij}}(1-\beta_j\hbox{sigm}(-||\mathbf{d}_{ij}|| + d_{\min})) 
\end{aligned}
\end{equation}
where $\chi_{ij} = \frac{1}{\sigma_j^2}\mathbf{d}_{ij}^{T}\mathbf{d}_{ij}$ and $\sigma_j>1$. In this model there is a switching condition if $||\mathbf{d}_{ij}|| \leq d_{\min}$, where the evader $j$ becomes ``scared'' when the distance with some herder $i$ is smaller than $d_{\min}$ and the intensity of the repulsive interaction increases, due to $0<\beta_j<1$. In practice, the sigmoid function $\hbox{sigm}(\cdot)$ is included in~\eqref{eq:LicitraBase} to avoid the violation of the $C^1$ assumption. In a similar fashion, a sigmoid function or an
hyperbolic tangent can be used to model saturations in the speed of the evaders.

Given the individual dynamics of the evaders in Eq.~\eqref{eq:base}, the joint system dynamics can be defined as
\begin{equation}
\dot{\mathbf{x}} = f(\mathbf{x},\mathbf{u})
\label{eq:initial_system},
\end{equation}
where $f(\mathbf{x},\mathbf{u})$ simply comes from stacking all $f_j(\mathbf{x},\mathbf{u})$.
This formulation allows to consider heterogeneous herds, with different number of evaders and motion models. 

Our goal is to herd the evaders towards the desired positions $\mathbf{x}^{*}\in X$ simultaneously. To do this, we define the position error of the evaders as $\widetilde{\mathbf{x}} = \mathbf{x} - \mathbf{x}^{*}$, and we set the control objective to be to drive $\widetilde{\mathbf{x}}$ to zero. This is what we call \textit{precise herding}. A particular instance of precise herding is the herding of a herd towards the same region, where some of the elements of $\mathbf{x}^*$ are specified by the practitioner and the rest of them are free or directly dependent on the others. Nevertheless, in general, we consider the case where each evader $\mathbf{x}_j$ has an individually assigned $\mathbf{x}_j^*$.

It is noteworthy that the reactive behavior of the evaders is with respect to the position of the herders. Therefore, a control strategy which determines inputs in terms of herders' positions is adequate to generalize the solution to different robotic platforms. This high-level approach can be combined with any robot-specific low-level controller. 

To keep the generality of the solution, in this work we assume that the maximum velocity of both herders and evaders is $v_{\max}$. This requires an initial caging phase, surrounding the evaders to avoid their escape before the precision herding begins. It is not mandatory to achieve a compact and closed encirclement to succeed but just the distribution of the herders near the evaders. This is the reason of developing a caging stage to complete the herding, described in Subsection~\ref{subsec:approaching}.


\section{Implicit Control}\label{sec:IC}

Herding seeks an expression for the input $\mathbf{u}$ such that the evaders go to their desired positions $\mathbf{x}^*.$
Besides, the herding may need to accomplish other requirements, such as a desired transient response. This can be translated into designing $\mathbf{u}$ such that the evaders follow a desired dynamics $f^*$,
\begin{equation}\label{eq:def_f_star}
    \dot{\mathbf{x}} = f^*(\mathbf{x}).
\end{equation}
To ease the analysis we consider, without loss of generality, $\mathbf{x}^*=\mathbf{0}$ as the desired equilibrium point of the system. The function $f^*$ can adopt any desired structure\footnote{The choice of $f^*$ as a state-dependent function is to follow the typical desired dynamics in regulation problems. However, the results can be directly extended to desired dynamics that depend on the input as well.} with the constraint of being stable in the equilibrium point. 

In order to control the evaders, the typical practice consists in finding a closed expression for $\mathbf{u}$ that depends only on the state and maps~\eqref{eq:initial_system} into~\eqref{eq:def_f_star}. Unfortunately, due to the complexity of the system, sometimes it is extremely challenging and time consuming to find such explicit mapping. Indeed, trying to find it generally yields to an implicit expression for the input. The analytical solution of the latter is not guaranteed, and the mechanisms to calculate such input do not certify that the desired properties prevail anymore. In contrast, we propose Implicit Control as a general control solution to achieve desired closed-loop dynamics despite the complexity of~\eqref{eq:initial_system}, overcoming the necessity of looking for a controller to perform the mapping from~\eqref{eq:initial_system} to~\eqref{eq:def_f_star}.

We study the conditions that allow to find $\mathbf{u}$ such that the evaders evolve according to $f^*$. Then, we propose a design procedure to solve the control input. The description is kept in general control terms since we believe that this procedure can be of interest in other control problems. 

\subsection{Control existence}\label{subsec:existence}

Firstly, it is necessary to know if there exists a smooth input that makes the actual dynamics equal to the desired ones. For convenience, let define
\begin{equation}\label{eq:def_h_CDC}
    h(\mathbf{x},\mathbf{u}) = f(\mathbf{x},\mathbf{u}) - f^*(\mathbf{x})
\end{equation}
as the working equation, shifting the analysis to that of looking for the existence of $\mathbf{u}$ such that $ h(\mathbf{x},\mathbf{u}) = \mathbf{0}$, i.e., looking for the existence of the \textit{roots} of $h(\mathbf{x},\mathbf{u})$. 

At this point, a straightforward solution can be to compute the input using a numerical method to find the roots of $h$. This is a simple and powerful approach, but lacks of formal guarantees, depends on the numerical method and is computationally expensive since we must ensure that the numerical method finds the roots of $h$ at all instants. Therefore, we will use this approach as a baseline to compare the Implicit Control solution.

In any case, the first step is to determine the existence of the roots of $h$, which is, in general, not trivial. We review some fundamentals of numerical analysis theory to show sufficient conditions to ensure existence and smoothness\footnote{Smoothness is not mandatory, but it is a desirable property.} of $\mathbf{u}$. The Implicit Function Theorem (Theorem 9.28 of~\cite{Rudin1976Mates}), applied to $h$, offers a set of sufficient conditions to guarantee the existence and smoothness of $\mathbf{u}$ in a local region.

\begin{theorem}[Adapted from Theorem 9.28 of~\cite{Rudin1976Mates}]
\label{theorem:exist_and_smooth}
Let 
\begin{equation}\label{eq:formal_h}
    h: I = X \times U \subset\mathbb{R}^{M}\times \mathbb{R}^{N}\longmapsto\mathbb{R}^{N}
\end{equation}
a $C^1$-mapping, such that $h(\mathbf{x}^*,\mathbf{u}_0^{*})=\mathbf{0}$ for some point $\mathbf{u}_0^{*} \in U$. Additionally, consider the Jacobian
\begin{equation}\label{eq:J_in_depth}
    \mathbf{J} \kern -2pt = \kern -2pt \left(\mathbf{J}_{\mathbf{x}} | \mathbf{J}_{\mathbf{u}})\right. \kern -3pt = 
    \kern -4pt \left.\begin{pmatrix}
    \frac{\partial h_1}{\partial \mathbf{x}_1} & \hdots & \frac{\partial h_1}{\partial \mathbf{x}_M}
    & 
    \frac{\partial h_1}{\partial \mathbf{u}_1} & \hdots & \frac{\partial h_1}{\partial \mathbf{u}_N}
    \\
    \vdots & \ddots & \vdots & \vdots & \ddots & \vdots
    \\
    \frac{\partial h_M}{\partial \mathbf{x}_1} & \hdots & \frac{\partial h_M}{\partial \mathbf{x}_M}
    &
    \frac{\partial h_M}{\partial \mathbf{u}_1} & \hdots & \frac{\partial h_M}{\partial \mathbf{u}_N}
    \end{pmatrix}\right.\kern -2pt.
\end{equation}
such that $\mathbf{J}_{\mathbf{u}}$ is non-singular in the point $(\mathbf{x}^*,\mathbf{u}_0^{*})$. Then, there exist open subsets $I^*\subset\mathbb{R}^{M}\times \mathbb{R}^{N}$ and $X^*\subset\mathbb{R}^{M}$, with  $(\mathbf{x},\mathbf{u}^{*}) \in I^*$ and $\mathbf{x}\in X^*$, having the following property: to every possible $\mathbf{x} \in X^*$ corresponds a unique $\mathbf{u}^*$ such that $(\mathbf{x},\mathbf{u}^{*}) \in I^*$ and $h(\mathbf{x},\mathbf{u}^*)=\mathbf{0}$.
\end{theorem}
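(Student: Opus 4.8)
\emph{Proof strategy.} The plan is to convert the search for roots of $h(\mathbf{x},\cdot)$ into a parametrized fixed-point problem and invoke the Banach fixed-point theorem, which is the route underlying Rudin's proof. First I would freeze the invertible matrix $A := \mathbf{J}_{\mathbf{u}}(\mathbf{x}^*,\mathbf{u}_0^{*})$ and introduce the Newton-type operator
\begin{equation}
    \Phi(\mathbf{x},\mathbf{u}) = \mathbf{u} - A^{-1}\, h(\mathbf{x},\mathbf{u}),
\end{equation}
observing that, for a fixed $\mathbf{x}$, the identity $\Phi(\mathbf{x},\mathbf{u}) = \mathbf{u}$ is equivalent to $h(\mathbf{x},\mathbf{u}) = \mathbf{0}$. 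Hence it suffices to show that $\Phi(\mathbf{x},\cdot)$ has a unique fixed point near $\mathbf{u}_0^{*}$ whenever $\mathbf{x}$ is near $\mathbf{x}^*$.

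The second step is the contraction estimate. Since $h$ is $C^1$, the partial differential $D_{\mathbf{u}}\Phi(\mathbf{x},\mathbf{u}) = I - A^{-1}\mathbf{J}_{\mathbf{u}}(\mathbf{x},\mathbf{u})$ is continuous and equals the zero matrix at $(\mathbf{x}^*,\mathbf{u}_0^{*})$. By continuity I would pick radii $r_0>0$ and $\rho>0$ with $\bar B(\mathbf{x}^*,r_0)\times\bar B(\mathbf{u}_0^{*},\rho)\subset I$ on which $\|D_{\mathbf{u}}\Phi\|\le \tfrac12$; the mean-value inequality then makes $\Phi(\mathbf{x},\cdot)$ a $\tfrac12$-contraction on $\bar B(\mathbf{u}_0^{*},\rho)$ for every such $\mathbf{x}$. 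Next, using $h(\mathbf{x}^*,\mathbf{u}_0^{*})=\mathbf{0}$ and continuity of $h$ in $\mathbf{x}$, I would shrink $r_0$ to some $r\le r_0$ so that $\|\Phi(\mathbf{x},\mathbf{u}_0^{*})-\mathbf{u}_0^{*}\| = \|A^{-1}h(\mathbf{x},\mathbf{u}_0^{*})\|\le \tfrac{\rho}{2}$ for $\mathbf{x}\in B(\mathbf{x}^*,r)$; the triangle inequality combined with the contraction bound then shows $\Phi(\mathbf{x},\cdot)$ maps the complete metric space $\bar B(\mathbf{u}_0^{*},\rho)$ into itself. Banach's theorem supplies a unique fixed point $\mathbf{u}^*=\mathbf{u}^*(\mathbf{x})$ with $h(\mathbf{x},\mathbf{u}^*)=\mathbf{0}$; setting $X^* := B(\mathbf{x}^*,r)$ and $I^* := X^*\times B(\mathbf{u}_0^{*},\rho)$ then yields the asserted open sets together with the existence-and-uniqueness conclusion.

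It remains to obtain smoothness. Continuity of $\mathbf{x}\mapsto\mathbf{u}^*(\mathbf{x})$ comes from the uniformity of the contraction constant (the standard parametric fixed-point estimate $\|\mathbf{u}^*(\mathbf{x}_1)-\mathbf{u}^*(\mathbf{x}_2)\|\le 2\,\|\Phi(\mathbf{x}_1,\mathbf{u}^*(\mathbf{x}_2))-\Phi(\mathbf{x}_2,\mathbf{u}^*(\mathbf{x}_2))\|$). Because the invertible matrices form an open set, after possibly shrinking the neighborhood $\mathbf{J}_{\mathbf{u}}(\mathbf{x},\mathbf{u}^*(\mathbf{x}))$ remains non-singular, and a difference-quotient argument applied to the identity $h(\mathbf{x},\mathbf{u}^*(\mathbf{x}))\equiv\mathbf{0}$, together with the $C^1$ regularity of $h$, yields differentiability of $\mathbf{u}^*$ with $D\mathbf{u}^*(\mathbf{x}) = -\mathbf{J}_{\mathbf{u}}^{-1}\mathbf{J}_{\mathbf{x}}$; continuity of the right-hand side upgrades this to $\mathbf{u}^*\in C^1$. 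I expect the main obstacle to be the joint calibration of the two radii $r$ and $\rho$, so that the contraction property and the self-map property hold simultaneously on the same product ball — once that is in place, the fixed-point step and the differentiability step are routine.
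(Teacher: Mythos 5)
The paper does not actually prove this statement: it is imported verbatim (as the label ``Adapted from Theorem 9.28 of~\cite{Rudin1976Mates}'' indicates) and used as a black box to justify Assumption~\ref{assumption:conditions}, so there is no in-paper proof to compare against. Your argument is nevertheless a correct and complete proof of the quoted result, and it is essentially the textbook one: Rudin's own route proves the Inverse Function Theorem first (via the contraction principle applied to a Newton-type map) and then derives the Implicit Function Theorem by applying it to $F(\mathbf{x},\mathbf{u})=(\mathbf{x},h(\mathbf{x},\mathbf{u}))$, whereas you apply the contraction principle directly to $\Phi(\mathbf{x},\mathbf{u})=\mathbf{u}-A^{-1}h(\mathbf{x},\mathbf{u})$ with the frozen Jacobian $A=\mathbf{J}_{\mathbf{u}}(\mathbf{x}^*,\mathbf{u}_0^*)$. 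The direct route is slightly more economical (no detour through local invertibility of the augmented map) and makes the parametric dependence on $\mathbf{x}$ explicit, which is what the continuity and $C^1$ steps need; the detour through the Inverse Function Theorem buys a reusable general tool but nothing extra here. Your calibration of the two radii, the self-map estimate $\|\Phi(\mathbf{x},\mathbf{u})-\mathbf{u}_0^*\|\le\tfrac{\rho}{2}+\tfrac{1}{2}\rho\le\rho$, and the formula $D\mathbf{u}^*=-\mathbf{J}_{\mathbf{u}}^{-1}\mathbf{J}_{\mathbf{x}}$ are all as in the standard proof; the only cosmetic point is that Banach gives uniqueness in the closed ball $\bar B(\mathbf{u}_0^*,\rho)$ while $I^*$ is taken open, but uniqueness on the smaller open set follows a fortiori. (Note also that the row indices running to $h_M$ in the paper's displayed Jacobian are a typo --- $h$ maps into $\mathbb{R}^N$, so $\mathbf{J}_{\mathbf{u}}$ is $N\times N$ --- which your proof implicitly and correctly assumes.)
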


The Theorem imposes three conditions to be fulfilled.
Firstly, there must exist an input, $\mathbf{u}_0^{*},$ which solves the control in $\mathbf{x}^{*}$. In the herding context, there must exist a stable configuration of the herders when the evaders are in their desired positions. A general condition to ensure this, for all $\mathbf{x} \in X$, is to have at least the same number of inputs than states to control. Otherwise, it is not always possible to control the system because there are less degrees of freedom than states to control. In the case of the precision herding problem, this means $n \geq m,$ since both evaders and herders are first order entities in the space and we want to \textit{control all the evaders simultaneously;} i.e., at all instants, all the herders contribute to the motion of all the evaders. Besides, a team of herders is needed so that, at equilibrium, the total repulsive force in each evader is zero.
Nevertheless, we demonstrate in Section~\ref{sec:simulations} that Implicit Control can also be used to control large-scale herds with only a few robotic herders if the control objective is the position of the centroid of the herd.

Secondly, $h$ must be of class $C^1$ in $(\mathbf{x}^*,\mathbf{u}_0^{*})$. If $f^*$ is chosen of class $C^1$ in $\mathbf{x}^*$, then the condition is accomplished because $f_j$ in~\eqref{eq:base} is of class $C^1$ for all $j$, so $f$ in~\eqref{eq:initial_system} is of class $C^1$.

The last condition requires the Jacobian of $h$ with respect to $\mathbf{u}$, $\mathbf{J}_{\mathbf{u}}$, to be non-singular in the desired location. Since for $m\neq n$ the matrix is not square, the right Moore-Penrose inverse matrix $\mathbf{J}_{\mathbf{u}}^{+}  = \mathbf{J}_{\mathbf{u}}^T(\mathbf{J}_{\mathbf{u}} \mathbf{J}_{\mathbf{u}}^T)^{-1}$ is generally considered as the one to be non-singular. An alternative is to resort to the Constant Rank Theorem (Theorem 5.22 in~\cite{lee2003smooth}) and find a projection of lower dimension which locally preserves the properties of the Jacobians but achieving a square matrix. The use of the pseudoinverse can be seen as this projection. Given the aforementioned features of $h$ and $\mathbf{J}_{\mathbf{u}}$, the last condition is accomplished in $\mathbf{x}^*$. Moreover, by restricting $I$ to the subspace without collisions it is ensured that the two last conditions of Theorem~\ref{theorem:exist_and_smooth} hold for all $\mathbf{x}$ and, therefore, the Theorem holds for all $\mathbf{x}$ in this subspace. Considering that each herder provokes a repulsive reaction in every evader, collision among herders and evaders will not happen. Similarly, since herders are the controlled entities, it is easy to prevent collisions among them.

Theorem~\ref{theorem:exist_and_smooth} is an analytical tool to guarantee the necessary conditions to afford the Implicit Control proposal. Further analysis on the existence of input will depend on the particularities of the controlled system. Therefore, we consider the following assumption, so that a suitable input can be computed.

\begin{assumption}\label{assumption:conditions}
There exists $\mathbf{u}_0$ such that $h(\mathbf{x},\mathbf{u}_0)=\mathbf{0}$, the functions $f$ and $f^*$ are of class $C^1$ and $\mathbf{J}_{\mathbf{u}}$ has rank $m$ for all $(\mathbf{x}, \mathbf{u}) \in I$.
\end{assumption}

The problems of uniqueness of solution or existence of local minima are not a concern since no cost function is being optimized. We underline this to emphasize that Implicit Control does not search for any local or global optimum, but for the roots of $h$.

\subsection{Control design}\label{subsec:design}

The proposed design method consists in expanding the initial system in~\eqref{eq:initial_system} with an input dynamics that converges to the roots of $h$. This transforms the problem to that of computing the input $\mathbf{u}$ as part of an expanded explicit system, described in continuous time and with analytical solution. To ease the presentation, we first assume $n=m$, generalizing to $n\neq m$ later.

The function $h$ is determined by the desired closed-loop behavior $f^*$. In an ideal scenario, $h=0$ always holds. However, achieving $h=0$ for all time is generally not an easy task. Our solution consists in considering $h$ as a dynamic system. The evolution of $h$ over time is not determined a priori; therefore, the input dynamics can be designed such that the evolution of $h$ follows a desired dynamics $h^*$,
\begin{equation}\label{eq:def_h_star}
    \frac{d h(\mathbf{x},\mathbf{u})}{d t} = h^*(\mathbf{x},\mathbf{u}),
\end{equation}
so that $h$ converges to zero. From now on, we assume that $h^*$ is chosen to make $h$ stable. With this in mind, we propose the following expression for the input dynamics,
\begin{equation}
\label{eq:u_dynamics}
\kern -4pt
    \dot{\mathbf{u}} = \mathbf{J}_{\mathbf{u}}^{-1}
    \kern -3pt\left(h^*(\mathbf{x},\mathbf{u})-\mathbf{J}_{\mathbf{x}} f(\mathbf{x}, \mathbf{u}) \right),
\end{equation}
where $\mathbf{J}_{\mathbf{x}}$ is the Jacobian of $h$ with respect to $\mathbf{x}$ and $\mathbf{J}_{\mathbf{u}}$ is the Jacobian of $h$ with respect to $\mathbf{u}$. 
This allows to build an explicit expanded system of the form
\begin{equation}
\label{eq:diff_eq_system_u}
\left\{
\begin{aligned}
    \dot{\mathbf{x}} &= f(\mathbf{x}, \mathbf{u}) 
    \\ 
    \dot{\mathbf{u}} &= \mathbf{J}_{\mathbf{u}}^{-1}
    \kern -3pt\left(h^*(\mathbf{x},\mathbf{u})-\mathbf{J}_{\mathbf{x}} f(\mathbf{x}, \mathbf{u}) \right)
\end{aligned}
\right. .
\end{equation}
The control problem is then reduced to analyzing the stability of the system in~\eqref{eq:diff_eq_system_u}. With the proposed expansion the complete system becomes autonomous, i.e., the input becomes part of the state of the expanded system. As the structure in~\eqref{eq:diff_eq_system_u} is very general, we provide stability results for typical dynamics. The first one considers that the original system is Input-to-State Stable (ISS). 
\begin{theorem}
\label{Theorem:ISS}
If the system defined in~\eqref{eq:initial_system} is ISS and Assumption~\ref{assumption:conditions} holds, then the input dynamics in~\eqref{eq:u_dynamics} ensures the convergence of $h$ towards zero and achieves the desired closed-loop dynamics in~\eqref{eq:def_f_star}.
\end{theorem}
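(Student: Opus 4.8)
The plan is to exploit the algebraic form of the input dynamics~\eqref{eq:u_dynamics}, which was constructed precisely so that the residual $h$ is forced to obey a prescribed, essentially decoupled differential equation. First I would check that~\eqref{eq:diff_eq_system_u} is well posed: under Assumption~\ref{assumption:conditions} the maps $f$ and $f^*$ are $C^1$ and $\mathbf{J}_{\mathbf u}$ has full rank on $I$, so $\mathbf{J}_{\mathbf u}^{-1}$ (the pseudoinverse in the case $n\neq m$) is continuous on $I$ and the right-hand side of~\eqref{eq:diff_eq_system_u} is locally Lipschitz; hence a unique maximal solution $(\mathbf{x}(t),\mathbf{u}(t))$ through any initial condition in $I$ exists. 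This legitimizes differentiating $t\mapsto h(\mathbf{x}(t),\mathbf{u}(t))$ along trajectories.

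The central step is the chain rule: along solutions of~\eqref{eq:diff_eq_system_u},
\[
\dot h = \mathbf{J}_{\mathbf x}\dot{\mathbf{x}} + \mathbf{J}_{\mathbf u}\dot{\mathbf{u}}
       = \mathbf{J}_{\mathbf x} f(\mathbf{x},\mathbf{u}) + \mathbf{J}_{\mathbf u}\mathbf{J}_{\mathbf u}^{-1}\bigl(h^*(\mathbf{x},\mathbf{u}) - \mathbf{J}_{\mathbf x} f(\mathbf{x},\mathbf{u})\bigr) = h^*(\mathbf{x},\mathbf{u}),
\]
so the residual $h$ satisfies exactly~\eqref{eq:def_h_star}. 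Since $h^*$ is chosen so that $h=\mathbf{0}$ is (globally) asymptotically stable for $\dot h = h^*$, this gives $h(t)\to\mathbf{0}$ and $h$ bounded on the maximal interval of existence. That already yields convergence of $h$ to zero, i.e., the first assertion.

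For the second assertion, I would write the evader dynamics as the vanishing-perturbation system $\dot{\mathbf{x}} = f(\mathbf{x},\mathbf{u}) = f^*(\mathbf{x}) + h(t)$ with $h(t)\to\mathbf{0}$. Here the ISS hypothesis on~\eqref{eq:initial_system} enters, in a cascade fashion: the autonomously converging $h$-subsystem drives the $\mathbf{x}$-subsystem, ISS keeps $\mathbf{x}(t)$ bounded under the bounded (indeed vanishing) excitation and steers it to the equilibrium of $f^*$; boundedness of $\mathbf{x}$, together with the fact that $\mathbf{u}$ is locally pinned down by $\mathbf{x}$ and the current value of $h$ through Theorem~\ref{theorem:exist_and_smooth}, keeps $\mathbf{u}(t)$ bounded and the trajectory inside $I$, so $\mathbf{J}_{\mathbf u}^{-1}$ never blows up and the solution is forward complete. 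With $h(t)\to\mathbf{0}$ one concludes $\dot{\mathbf{x}} = f(\mathbf{x},\mathbf{u})\to f^*(\mathbf{x})$, i.e., the closed loop asymptotically realizes~\eqref{eq:def_f_star}.

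The main obstacle I expect is precisely this last paragraph. The identity $\dot h = h^*$ is essentially free, but by itself it says nothing about forward completeness, about the trajectory remaining in the region where $\mathbf{J}_{\mathbf u}$ is invertible, or about boundedness of $\mathbf{u}(t)$. Making those rigorous is where the ISS assumption must do real work — it has to be translated into a statement about the expanded autonomous system~\eqref{eq:diff_eq_system_u}, either via a cascade/small-gain argument reconciling ISS of~\eqref{eq:initial_system} with the effective additive disturbance $h$ through $f=f^*+h$, or via a composite Lyapunov function on $X\times U$ that combines an ISS-Lyapunov function for $f^*$ with the decay of $\|h\|$.
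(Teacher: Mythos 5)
Your proposal is correct and follows essentially the same route as the paper's proof: apply the chain rule along trajectories of the expanded system to obtain $\dot h = h^*$ (using the full rank of $\mathbf{J}_{\mathbf{u}}$ from Assumption~\ref{assumption:conditions}), conclude $h\to\mathbf{0}$, and then invoke ISS to treat the nonzero transient $h$ as a vanishing perturbation so that $\dot{\mathbf{x}}=f^*(\mathbf{x})+h$ converges to the desired closed loop. Your added attention to well-posedness, forward completeness, and the trajectory remaining where $\mathbf{J}_{\mathbf{u}}$ is invertible goes beyond what the paper writes down, but it is a refinement of the same argument rather than a different one.
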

\begin{proof}
Applying the chain rule over $dh/dt$ yields to
\begin{equation}\label{eq:der_h}
    \frac{d h(\mathbf{x},\mathbf{u})}{d t} =  \mathbf{J}_{\mathbf{x}}(\mathbf{x},\mathbf{u}) \frac{d \mathbf{x}}{d t} + \mathbf{J}_{\mathbf{u}}(\mathbf{x},\mathbf{u}) \frac{d \mathbf{u}}{d t},
\end{equation}
where $\mathbf{J}_{\mathbf{x}}$ and $\mathbf{J}_{\mathbf{u}}$ are the Jacobians of $h$ with respect to $\mathbf{x}$ and $\mathbf{u}$. By Assumption~\ref{assumption:conditions}, $\mathbf{J}_{\mathbf{u}}$ is full rank. Then, the substitution of the input dynamics in~\eqref{eq:u_dynamics} gives~\eqref{eq:def_h_star}, meaning that $\dot{\mathbf{u}}$ imposes the desired dynamics over $h$. 

This ensures that, at some point, $h = \mathbf{0}$. However, in the transient, $h \neq \mathbf{0}$ so the input $\mathbf{u}$ is not the one that achieves $f=f^*$. This non-zero difference can be seen as a perturbation in the input $\mathbf{u}$ of system~\eqref{eq:initial_system}. Nevertheless, since the system is ISS, the stability remains despite the perturbation~\cite{Khalil2014Nonlinear}. Due to the stability of $h$, the perturbation vanishes with time, and the expanded system in~\eqref{eq:diff_eq_system_u} makes~\eqref{eq:initial_system} to converge to~\eqref{eq:def_f_star}.
\end{proof}

The ISS property is common but particular to some systems. The next case of study aims at using Implicit Control for more general dynamics. First, we restrict the possible desired closed-loop behaviors to
\begin{equation}
\label{Eq:f_star_real}
f^*(\mathbf{x})=-\mathbf{K}_f(\mathbf{x})\mathbf{x},
\end{equation}
where $\mathbf{K}_f(\mathbf{x})$ is a function which returns a positive definite matrix. In particular, $\mathbf{K}_f(\mathbf{x})$ can be a polynomial whose terms are all of even order. For instance, if $\mathbf{K}_f(\mathbf{x}) = \mathbf{K}_f$, then  $f^*(\mathbf{x})= -\mathbf{K}_f \mathbf{x}$ has a linear behavior with its corresponding exponential-like transient, desired settling time and stability properties. 
Similarly, let
\begin{equation}
\label{eq:h_star}
    h^*(\mathbf{x},\mathbf{u}) =
    - \mathbf{K}_h (h(\mathbf{x},\mathbf{u}))h(\mathbf{x},\mathbf{u}),
\end{equation}
with $\mathbf{K}_h(h(\mathbf{x},\mathbf{u}))$ a function which returns a positive definite matrix as well. 

\begin{theorem}
\label{Theorem:u_dynamics}
Given Assumption~\ref{assumption:conditions}, the definition of $f^*$ in~\eqref{Eq:f_star_real} and the definition of $h^*$ in~\eqref{eq:h_star}, if matrices $\mathbf{K}_f(\mathbf{x})$ and $\mathbf{K}_h(h(\mathbf{x},\mathbf{u}))$ are chosen in such a way that
\begin{equation}\label{eq:negative_matrix}
    \mathbf{K} = \left.\begin{pmatrix}
    -\mathbf{K}_f(\mathbf{x}) & 0.5 \mathbf{I}
    \\
    0.5 \mathbf{I} & -\mathbf{K}_h (h(\mathbf{x},\mathbf{u}))
    \end{pmatrix}\right.
\end{equation}
is negative definite, then the system in Eq.~\eqref{eq:diff_eq_system_u} is Globally Asymptotically Stable (GAS).
\end{theorem}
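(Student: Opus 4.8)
The plan is to build a single Lyapunov function on the expanded state $(\mathbf{x},h)$ and show its derivative is negative definite by exploiting precisely the block structure of $\mathbf{K}$ in~\eqref{eq:negative_matrix}. Concretely, I would take
\begin{equation*}
V(\mathbf{x},h) = \tfrac12\,\mathbf{x}^{T}\mathbf{x} + \tfrac12\,h^{T}h,
\end{equation*}
which is radially unbounded and positive definite in the expanded coordinates. The key observation is that along trajectories of~\eqref{eq:diff_eq_system_u}, the input dynamics~\eqref{eq:u_dynamics} was constructed so that $\dot h = h^*(\mathbf{x},\mathbf{u}) = -\mathbf{K}_h(h)\,h$ (this is exactly what the first part of the proof of Theorem~\ref{Theorem:ISS} establishes, using that $\mathbf{J}_{\mathbf{u}}$ is full rank by Assumption~\ref{assumption:conditions}). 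Meanwhile $\dot{\mathbf{x}} = f(\mathbf{x},\mathbf{u})$, and by the definition $h = f - f^*$ together with~\eqref{Eq:f_star_real} we can write $\dot{\mathbf{x}} = f^*(\mathbf{x}) + h = -\mathbf{K}_f(\mathbf{x})\mathbf{x} + h$. So the expanded closed loop is, in $(\mathbf{x},h)$ coordinates,
\begin{equation*}
\dot{\mathbf{x}} = -\mathbf{K}_f(\mathbf{x})\mathbf{x} + h,\qquad \dot h = -\mathbf{K}_h(h)\,h.
\end{equation*}

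With that rewriting, the computation is short: $\dot V = \mathbf{x}^{T}\dot{\mathbf{x}} + h^{T}\dot h = -\mathbf{x}^{T}\mathbf{K}_f(\mathbf{x})\mathbf{x} + \mathbf{x}^{T}h - h^{T}\mathbf{K}_h(h)\,h$. The cross term $\mathbf{x}^{T}h$ is symmetrized as $\tfrac12\mathbf{x}^{T}h + \tfrac12 h^{T}\mathbf{x}$, so that
\begin{equation*}
\dot V = \begin{pmatrix}\mathbf{x}\\ h\end{pmatrix}^{T}\!\!\begin{pmatrix} -\mathbf{K}_f(\mathbf{x}) & 0.5\,\mathbf{I}\\ 0.5\,\mathbf{I} & -\mathbf{K}_h(h)\end{pmatrix}\!\begin{pmatrix}\mathbf{x}\\ h\end{pmatrix} = \begin{pmatrix}\mathbf{x}\\ h\end{pmatrix}^{T}\!\mathbf{K}\begin{pmatrix}\mathbf{x}\\ h\end{pmatrix},
\end{equation*}
which is exactly the matrix $\mathbf{K}$ in~\eqref{eq:negative_matrix}. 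By hypothesis $\mathbf{K}$ is negative definite (uniformly, if $\mathbf{K}_f,\mathbf{K}_h$ are bounded below, or at least pointwise negative definite everywhere), hence $\dot V < 0$ for all $(\mathbf{x},h)\neq(\mathbf{0},\mathbf{0})$, and GAS follows from the standard Lyapunov theorem (e.g. Theorem~4.2 of~\cite{Khalil2014Nonlinear}) together with radial unboundedness of $V$. Finally, $h\to\mathbf{0}$ means $f=f^*$ asymptotically and $\mathbf{x}\to\mathbf{0}$ recovers the desired equilibrium, so the closed-loop dynamics~\eqref{eq:def_f_star} is achieved.

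The main obstacle I anticipate is not the Lyapunov computation itself but justifying the pointwise-to-uniform passage carefully: if $\mathbf{K}_f(\mathbf{x})$ or $\mathbf{K}_h(h)$ are only positive definite pointwise without a uniform lower bound, then $\mathbf{K}$ being negative definite at each point gives $\dot V<0$ but not necessarily $\dot V \le -cV$, so one must either impose a uniform bound (natural if $\mathbf{K}_f,\mathbf{K}_h$ are constant or have a positive constant term, as suggested after~\eqref{Eq:f_star_real}) or invoke a LaSalle-type argument to rule out trajectories stalling away from the origin. A secondary point worth stating explicitly is that the rewriting $\dot{\mathbf{x}} = -\mathbf{K}_f(\mathbf{x})\mathbf{x}+h$ is an identity, not an approximation — it holds verbatim from the definition $h=f-f^*$ — so no ISS assumption is needed here, which is the whole point of separating this theorem from Theorem~\ref{Theorem:ISS}. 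I would also note that $\mathbf{J}_{\mathbf{u}}^{-1}$ in~\eqref{eq:u_dynamics} should be read as the pseudoinverse when $n\neq m$, consistently with the discussion after Theorem~\ref{theorem:exist_and_smooth}, so that $\dot h = h^*$ still holds.
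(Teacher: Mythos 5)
Your proof is correct and follows essentially the same route as the paper: the same Lyapunov candidate $V=\tfrac12\mathbf{x}^T\mathbf{x}+\tfrac12 h^Th$, the same substitutions $\dot h=h^*$ and $f=f^*+h$, and the same reduction of $\dot V$ to the quadratic form in $\mathbf{K}$. Your closing remark about the pointwise-versus-uniform negative definiteness of $\mathbf{K}$ is a legitimate refinement that the paper's own proof does not address, but it does not change the argument's structure.
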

\begin{proof}
Consider $V = \frac{1}{2}\mathbf{x}^T\mathbf{x} + \frac{1}{2}h^T h$ as a Lyapunov candidate function, whose derivative is
\begin{equation}
    \dot{V} = \mathbf{x}^T \dot{\mathbf{x}} + h^T \dot{h} = \mathbf{x}^T f + h^T h^*
\end{equation}
and where $\dot{h}$ is substituted by~\eqref{eq:def_h_star} following the same steps of the proof of Theorem~\ref{Theorem:ISS}. 
Using the definition of $h$ in~\eqref{eq:def_h_CDC} yields to
\begin{equation}
    \dot{V} = \mathbf{x}^T h + \mathbf{x}^T f^* + h^T h^*.
\end{equation}
Considering $f^*$ and $h^*$ from Eqs.~\eqref{Eq:f_star_real} and~\eqref{eq:h_star}, $\dot{V}$ is
\begin{equation}\label{eq:Ly_theorem_der_3}
    \dot{V} = \begin{pmatrix} \mathbf{x}^T & h^T \end{pmatrix} \mathbf{K} \begin{pmatrix} \mathbf{x}^T & h^T \end{pmatrix}^T.
\end{equation}
Then, designing $\mathbf{K}_f(\mathbf{x})$ and $\mathbf{K}_h(h(\mathbf{x},\mathbf{u}))$ such that $\mathbf{K}$ is negative definite guarantees that~\eqref{eq:diff_eq_system_u} is GAS, and both $h$ and $\mathbf{x}$ go to zero.
\end{proof}

In practice, $||\mathbf{K}_h|| \gg ||\mathbf{K}_f||$ imposes the convergence of $h$ to be much faster than the desired closed-loop dynamics, so the evaders will behave following $f^*$.

\subsection{Control extensions}\label{subsec:extensions}

Assuming $n=m$ is a restrictive condition. 
We deal with the generalization to $n \neq m$ by 
replacing $\mathbf{J}_{\mathbf{u}}^{-1}$ with the pseudoinverse $\mathbf{J}_{\mathbf{u}}^{+}$. In the case that there are some states whose dynamics are not explicitly dependent on $\mathbf{u}$ (for instance, the velocity of the evaders might not depend on herders' position, but their acceleration), then a 
linear transformation $\mathbf{T}$ over $h$ can be applied
\begin{equation}\label{eq:transformation}
    \bar{h}(\bar{\mathbf{x}},\mathbf{u}) = \mathbf{T}h(\mathbf{x},\mathbf{u}).
\end{equation}
The purpose is to create a new group of states $\bar{\mathbf{x}}$ combining all the states which do not depend on the input with some of the states which do depend on the input. Then, Implicit Control can be applied, over the new function $\bar{h}$, to control states which do not explicitly depend on the input.

Furthermore, all the previous results can be extended to dynamic references. Let consider a desired state reference $\mathbf{x}^*$ with desired dynamics $\dot{\mathbf{x}}^*$. 
\begin{corollary}\label{corollary}
Assuming that $\dot{\mathbf{x}}^*$ is smooth and bounded, the results in Theorems~\ref{Theorem:ISS}-\ref{Theorem:u_dynamics} hold if the function $h$ is redefined as
\begin{equation}\label{eq:TV_h}
    h(\mathbf{x},\mathbf{u}) =  f(\mathbf{x},\mathbf{u})-f^{*}(\mathbf{x})- \dot{\mathbf{x}}^{*}.
\end{equation}
\end{corollary}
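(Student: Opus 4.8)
The plan is to reuse the arguments in the proofs of Theorems~\ref{Theorem:ISS} and~\ref{Theorem:u_dynamics} almost verbatim: the only genuine novelties are that the regulated quantity becomes the tracking error $\widetilde{\mathbf{x}} = \mathbf{x}-\mathbf{x}^*$ instead of $\mathbf{x}$, and that the redefined $h$ of~\eqref{eq:TV_h} carries an explicit time dependence through $\dot{\mathbf{x}}^*(t)$. First I would observe that, with $h$ as in~\eqref{eq:TV_h}, the identity $h(\mathbf{x},\mathbf{u})=\mathbf{0}$ is equivalent to $\dot{\widetilde{\mathbf{x}}} = \dot{\mathbf{x}}-\dot{\mathbf{x}}^* = f(\mathbf{x},\mathbf{u})-\dot{\mathbf{x}}^* = f^*(\widetilde{\mathbf{x}})$, i.e. the error obeys the prescribed stable dynamics, so driving this new $h$ to zero is exactly the tracking objective. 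I would also note that, since $\dot{\mathbf{x}}^*$ depends only on $t$ (equivalently on the exogenous reference) and not on $\mathbf{x}$ or $\mathbf{u}$, the Jacobians $\mathbf{J}_{\mathbf{x}}$ and $\mathbf{J}_{\mathbf{u}}$ of the redefined $h$ coincide with those of $f-f^*$; hence Assumption~\ref{assumption:conditions} continues to hold unchanged.

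Next I would redo the chain-rule step~\eqref{eq:der_h} for the redefined $h$, which now reads
\begin{equation*}
\frac{dh}{dt} = \mathbf{J}_{\mathbf{x}}\, f(\mathbf{x},\mathbf{u}) + \mathbf{J}_{\mathbf{u}}\,\dot{\mathbf{u}} - \ddot{\mathbf{x}}^*,
\end{equation*}
so that the input dynamics~\eqref{eq:u_dynamics} must be augmented with a feedforward term, $\dot{\mathbf{u}} = \mathbf{J}_{\mathbf{u}}^{-1}\bigl(h^*(\mathbf{x},\mathbf{u}) - \mathbf{J}_{\mathbf{x}} f(\mathbf{x},\mathbf{u}) + \ddot{\mathbf{x}}^*\bigr)$ (the pseudoinverse $\mathbf{J}_{\mathbf{u}}^{+}$ if $n\neq m$). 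Smoothness of $\dot{\mathbf{x}}^*$ guarantees $\ddot{\mathbf{x}}^*$ exists, and together with the boundedness hypothesis the extra term is well defined and bounded on the operating set, so the expanded system stays well posed. With this law, $dh/dt = h^*$ holds exactly, just as in the original proofs.

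From here the two cases follow the originals with $\mathbf{x}$ replaced by $\widetilde{\mathbf{x}}$. In the ISS case, $h\to\mathbf{0}$ because $h^*$ is chosen stable, and $h\neq\mathbf{0}$ during the transient acts as an input perturbation that, by ISS of the error subsystem under the stable field $f^*$, does not destabilize $\widetilde{\mathbf{x}}$ and vanishes with $h$, yielding $\dot{\widetilde{\mathbf{x}}}\to f^*$. In the GAS case I would take $V = \tfrac12\widetilde{\mathbf{x}}^T\widetilde{\mathbf{x}} + \tfrac12 h^T h$; using $\dot{\widetilde{\mathbf{x}}} = h + f^*(\widetilde{\mathbf{x}})$ (which is~\eqref{eq:TV_h} rearranged), $f^*=-\mathbf{K}_f\widetilde{\mathbf{x}}$ and $h^*=-\mathbf{K}_h h$, one obtains $\dot V = (\widetilde{\mathbf{x}}^T\ h^T)\,\mathbf{K}\,(\widetilde{\mathbf{x}}^T\ h^T)^T$ with the very same $\mathbf{K}$ of~\eqref{eq:negative_matrix}, so negative definiteness of $\mathbf{K}$ delivers GAS of the expanded error system.

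The main (and essentially only non-mechanical) obstacle is handling the explicit time dependence brought in by $\dot{\mathbf{x}}^*(t)$: one must verify that it leaves $\mathbf{J}_{\mathbf{x}},\mathbf{J}_{\mathbf{u}}$ untouched, that the additional $\ddot{\mathbf{x}}^*$ feedforward is admissible under precisely the stated smoothness/boundedness assumption rather than requiring stronger higher-order bounds, and that ``the system is ISS'' is correctly reinterpreted as ISS of the error dynamics with respect to the residual $h$ (and, relatedly, that $f^*$ is read as a function of $\widetilde{\mathbf{x}}$ so the sign-definiteness in the Lyapunov computation is preserved). Once these points are fixed, the vanishing-perturbation argument and the Lyapunov inequality are identical to those already proved.
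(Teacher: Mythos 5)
Your proof is correct and follows the same overall strategy as the paper (redefine $h$, then reduce everything to Theorems~\ref{Theorem:ISS} and~\ref{Theorem:u_dynamics} with $\mathbf{x}$ replaced by the error $\widetilde{\mathbf{x}}$), but you are more explicit, and arguably more rigorous, on the one point where the paper is terse. The paper's proof simply asserts that ``the time derivative of $\dot{\mathbf{x}}^*$ is implicitly included in the Jacobians of the new $h$, therefore the structure in system~\eqref{eq:diff_eq_system_u} remains,'' and that since $\dot{\mathbf{x}}^*$ does not depend on $\mathbf{u}$ the Jacobians are unchanged. You instead carry out the chain rule on the redefined $h$, observe that the explicit time dependence produces an extra $-\ddot{\mathbf{x}}^*$ term that lives in neither $\mathbf{J}_{\mathbf{x}}$ nor $\mathbf{J}_{\mathbf{u}}$, and compensate it with a feedforward term $+\ddot{\mathbf{x}}^*$ inside~\eqref{eq:u_dynamics}. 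This is the right call: with the input law left literally unchanged one obtains $\dot{h}=h^*-\ddot{\mathbf{x}}^*$, i.e., only convergence of $h$ to a bounded neighborhood of zero driven by the bounded disturbance $\ddot{\mathbf{x}}^*$, whereas your feedforward recovers $\dot{h}=h^*$ exactly and hence the full statements of the theorems; the smoothness hypothesis on $\dot{\mathbf{x}}^*$ is precisely what makes $\ddot{\mathbf{x}}^*$ available. Your remark that $f^*$ must be read as a function of $\widetilde{\mathbf{x}}$ so that the Lyapunov computation with $V=\frac{1}{2}\widetilde{\mathbf{x}}^T\widetilde{\mathbf{x}}+\frac{1}{2}h^Th$ retains the block structure of $\mathbf{K}$ in~\eqref{eq:negative_matrix} is also a clarification the paper leaves implicit. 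In short: same decomposition, but your version makes explicit a compensation step the paper glosses over, and it buys exact tracking rather than practical stability.
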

\begin{proof}
By assuming that $\dot{\mathbf{x}}^*$ is smooth and bounded we maintain the existence and smoothness of the input. The original system is still described by~\eqref{eq:initial_system} but the control objective changes to
\begin{equation}\label{eq:TV_change}
    f^*(\mathbf{x},\mathbf{u}) = f(\mathbf{x},\mathbf{u}) - \dot{\mathbf{x}}^{*}.
\end{equation}
The expression comes from imposing that the error dynamics follow $f^*$. This information can be incorporated in the design of the function $h$ by
adding the known time-varying term in~\eqref{eq:TV_change} to compensate the dynamic reference. The time derivative of $\dot{\mathbf{x}}^*$ is implicitly included in the Jacobians of the new $h$, therefore the structure in system~\eqref{eq:diff_eq_system_u} remains. Moreover, since $\dot{\mathbf{x}}^{*}$ does not depend on $\mathbf{u}$, the Jacobians in Theorems~\ref{Theorem:ISS}-\ref{Theorem:u_dynamics} do not change and the statements in Theorems~\ref{Theorem:ISS}-\ref{Theorem:u_dynamics} hold.
\end{proof}

The importance of Corollary~\ref{corollary} is that our herding solution can be seen as a generalization of previous herding works~\cite{Pierson_2018_TR_Herding}~\cite{song2021herding} where the purpose is to drive the evaders to a certain region. By imposing a time-varying reference, herders not only drive evaders towards particular positions but also steer them towards a particular region whereas maintaining a specific formation. Besides, note that no particular restrictions regarding the initial configuration of the entities is specified. Nevertheless, it is convenient that the herders surround the evaders to avoid escapes before the beginning of the herding, so in Subsection~\ref{subsec:approaching} we develop a secure caging stage.

From an algorithmic point of view, the calculation of the control input is very simple. At each instant, the controller receives $\mathbf{x}$ and $\mathbf{u}$ from an observer and/or from measurements. Then, $f^*(\mathbf{x})$ is computed, which, together with the dynamic model of the evaders $f(\mathbf{x},\mathbf{u})$ (e.g.,~\eqref{eq:PiersonBase} or~\eqref{eq:LicitraBase}), allows to compute $h^*(\mathbf{x},\mathbf{u})$ with Eq.~\eqref{eq:h_star}. Besides, the Jacobians $\mathbf{J}_{\mathbf{x}}$ and $\mathbf{J}_{\mathbf{u}}$ can be calculated either analytically or numerically, depending on their complexity. Finally, $\dot{\mathbf{u}}$ is computed from Eq.~\eqref{eq:diff_eq_system_u}. Regarding the estimation of $\mathbf{x}$ and $\mathbf{u}$, another advantage of Implicit Control is that $\mathbf{u}$ now belongs to the state of an expanded system. Therefore, it is possible to apply the same estimation techniques as with $\mathbf{x}$, robustifying the herding solution.


\section{Adding adaptation}\label{sec:adaptation}

In the previous Section perfect knowledge of the evaders' dynamics is assumed. However, in real problems there exist sources of uncertainty. We derive a general adaptation law which overcomes this issue and preserves the control properties.

Consider the parameters $\theta_j$ in Eqs.~\eqref{eq:PiersonBase} and~\eqref{eq:LicitraBase}. If we define $\theta = \hbox{diag}(\mathbf{\theta}_1,\hdots,\mathbf{\theta}_m) \in \mathbb{R}^{m \times m}$, then the original system~\eqref{eq:initial_system} is rewritten to the following, with a little abuse of notation for the sake of readability,
\begin{equation}\label{eq:original_theta}
    \dot{\mathbf{x}} = f_{\theta}(\mathbf{x}, \mathbf{u},\mathbf{\theta}) = \mathbf{\theta} f(\mathbf{x},\mathbf{u}).
\end{equation}
We assume a linear dependency with respect to the uncertain parameters, and that the parameters are constant. This assumption is typical in adaptive control theory and, in the case of Implicit Control, implies that the gradients with respect to the parameters do not depend on the parameters, which is the key that yields to the proposed adaptation law. Nevertheless, despite simple, linear parameters can still model the aggressiveness in the repulsion provoked by the herders. Besides, notice that $\theta_j > 0$ because otherwise the evaders do not move or are not repelled by the herders.

We represent the uncertainty in $\mathbf{\theta}$ by the vector of estimates $\hat{\mathbf{\theta}}$, defining the error between the real and estimated parameters as $\tilde{\mathbf{\theta}} = \mathbf{\theta} - \hat{\mathbf{\theta}}$. From now on, the symbols $\hat{\cdot}$ and $\tilde{\cdot}$ denote the quantities that depend on $\hat{\theta}$ and $\tilde{\theta}$. In general, $\tilde{\mathbf{\theta}} \neq \mathbf{0}$ so Implicit Control no longer guarantees stability. The difference between actual and estimated dynamics can be defined by a new function
\begin{equation}\label{eq:def_PSI}
    \tilde{h}(\mathbf{x}, \mathbf{u},\tilde{\theta}) = h(\mathbf{x}, \mathbf{u},\mathbf{\theta}) - \hat{h}(\mathbf{x}, \mathbf{u},\hat{\mathbf{\theta}}) = \tilde{\mathbf{\theta}}f(\mathbf{x}, \mathbf{u}).
\end{equation}
This is a standard adaptive control problem, for which there exist different solutions for both linear and nonlinear systems~\cite{Khalil2014Nonlinear}. Nevertheless, we propose a different adaptation law, following the principles of Implicit Control, that simplifies the stability analysis.

The objective of the adaptation law is to design $\dot{\hat{\mathbf{\theta}}}$ such that $\tilde{h}$ converges to zero while the closed-loop behavior of the system is preserved. The structure of the problem, and, in particular, of $\tilde{h}$, is similar to that of Implicit Control. We can follow the same input design procedure of Implicit Control, in this case over $\tilde{h}$, to design an adaptation law of the form
\begin{equation}\label{eq:adaptive_law}
    \dot{\hat{\mathbf{\theta}}} = -\tilde{\mathbf{J}}_{\mathbf{\theta}}^{-1}(\tilde{h}^* + \hat{\mathbf{J}}_{\mathbf{x}} \dot{\hat{\mathbf{x}}} +  \hat{\mathbf{J}}_{\mathbf{u}} \dot{{\mathbf{u}}}).
\end{equation}
Here,  
\begin{equation}\label{eq:J_thetaxthetau}
\begin{aligned} 
    \kern -0.3cm  \hat{\mathbf{J}}_{\mathbf{x}}  \kern -0.1cm = \kern -0.1cm
    \left.\begin{pmatrix}
    \frac{\partial \hat{h}_{1}}{\partial {\mathbf{x}}_1} & \kern -0.3cm \hdots \kern -0.3cm & \frac{\partial \hat{h}_{1}}{\partial {\mathbf{x}}_m}
    \\
    \vdots & \kern -0.3cm \ddots \kern -0.3cm & \vdots 
    \\
    \frac{\partial \hat{h}_{m}}{\partial {\mathbf{x}}_1} & \kern -0.3cm \hdots \kern -0.3cm & \frac{\partial \hat{h}_{m}}{\partial {\mathbf{x}}_m}
    \end{pmatrix}\right.,
& \hbox{ }
    \hat{\mathbf{J}}_{\mathbf{u}}  \kern -0.1cm = \kern -0.1cm
    \left.\begin{pmatrix}
    \frac{\partial \hat{h}_{1}}{\partial {\mathbf{u}}_1} & \kern -0.3cm \hdots \kern -0.3cm & \frac{\partial \hat{h}_{1}}{\partial {\mathbf{u}}_n}
    \\
    \vdots & \kern -0.3cm \ddots  \kern -0.3cm & \vdots 
    \\
    \frac{\partial \hat{h}_{m}}{\partial {\mathbf{u}}_1} & \kern -0.3cm \hdots \kern -0.3cm & \frac{\partial \hat{h}_{m}}{\partial {\mathbf{u}}_n}
    \end{pmatrix}\right.,
\end{aligned}
\end{equation}
$\tilde{\mathbf{J}}_{\mathbf{\theta}} = \hbox{diag}(f)$, and $\tilde{h}^*$ is a free-design function. 

As in the design of the input dynamics, $\tilde{h}^*$ determines the desired dynamic behavior for the derivative of~\eqref{eq:def_PSI}. Concatenating the new set of equations yields a new expanded system
\begin{equation}\label{eq:diff_eq_adapt}
\left\{
\begin{aligned}
    \dot{\mathbf{x}} &= \theta f
    \\ 
    \dot{{\mathbf{u}}} &= \hat{\mathbf{J}}_{\mathbf{u}}^{+}
    \kern -3pt\left(\hat{h}^*-\hat{\mathbf{J}}_{\mathbf{x}} \dot{\hat{\mathbf{x}}} \right)
    \\ 
    \dot{\hat{\mathbf{\theta}}} &= -\tilde{\mathbf{J}}_{\mathbf{\theta}}^{-1}(\tilde{h}^* + \hat{\mathbf{J}}_{\mathbf{x}} \dot{\hat{\mathbf{x}}} +  \hat{\mathbf{J}}_{\mathbf{u}} \dot{{\mathbf{u}}})
\end{aligned}
\right.. 
\end{equation}
Acknowledging the similarities between~\eqref{eq:diff_eq_adapt} and~\eqref{eq:diff_eq_system_u}, we extend the results in Theorems~\ref{Theorem:ISS} and \ref{Theorem:u_dynamics}.
\begin{proposition}
\label{proposition:ISS}
If the system defined in~\eqref{eq:initial_system} is ISS and Assumption~\ref{assumption:conditions} holds, then the adaptive law in~\eqref{eq:adaptive_law} preserves the properties of the Implicit Control if
\begin{equation}\label{eq:right_hstar}
    \tilde{h}^* = h^*_{\theta} - \dot{h} + \hat{\mathbf{J}}_{\mathbf{x}}(\dot{\mathbf{x}}-\dot{\hat{\mathbf{x}}}),
\end{equation}
where $h^*_{\theta}$ is a free-design function chosen to make $\tilde{h}$ stable.
\end{proposition}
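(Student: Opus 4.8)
The plan is to mirror the proof of Theorem~\ref{Theorem:ISS}, now carried out for $\tilde h$ instead of $h$, and then to close the argument as a cascade. First I would differentiate $\tilde h = h - \hat h = \tilde\theta f(\mathbf{x},\mathbf{u})$ along the trajectories of the expanded system~\eqref{eq:diff_eq_adapt}. By the chain rule, $d\hat h/dt = \hat{\mathbf{J}}_{\mathbf{x}}\dot{\hat{\mathbf{x}}} + \hat{\mathbf{J}}_{\mathbf{u}}\dot{\mathbf{u}} + \tilde{\mathbf{J}}_{\theta}\dot{\hat\theta}$, where $\tilde{\mathbf{J}}_{\theta} = \partial\hat h/\partial\hat\theta = \hbox{diag}(f)$ is, thanks to the linear-in-parameters assumption, independent of $\theta$ and invertible away from the degenerate configurations excluded by Assumption~\ref{assumption:conditions}. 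Substituting the adaptation law~\eqref{eq:adaptive_law} collapses this to $d\hat h/dt = -\tilde h^*$, so that $d\tilde h/dt = \dot h + \tilde h^*$.

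Next I would insert the prescribed $\tilde h^*$ from~\eqref{eq:right_hstar}. The $\dot h$ term cancels and one is left with $d\tilde h/dt = h^*_{\theta} + \hat{\mathbf{J}}_{\mathbf{x}}(\dot{\mathbf{x}}-\dot{\hat{\mathbf{x}}})$. With exact state information $\dot{\hat{\mathbf{x}}}=\dot{\mathbf{x}}$, so this is precisely $\dot{\tilde h}=h^*_{\theta}$, the stable dynamics chosen by design, and $\tilde h\to\mathbf{0}$; in the general case the residual $\hat{\mathbf{J}}_{\mathbf{x}}(\dot{\mathbf{x}}-\dot{\hat{\mathbf{x}}})$ enters the $\tilde h$-subsystem as an exogenous signal that vanishes as the state estimator converges, and ISS of the $h^*_{\theta}$-dynamics (guaranteed by the stable design of $h^*_{\theta}$) again yields $\tilde h\to\mathbf{0}$, hence $\tilde\theta f\to\mathbf{0}$.

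Finally I would close the argument as a cascade, exactly as in Theorem~\ref{Theorem:ISS}. Because $\tilde h=\tilde\theta f$ is driven to zero, the discrepancy between the true dynamics $\theta f$ and the model $\hat\theta f$ used to compute $\dot{\mathbf{u}}$ vanishes; this discrepancy acts as a perturbation on the input channel of the Implicit-Control subsystem~\eqref{eq:diff_eq_system_u}, which in turn renders $h$ a vanishing perturbation on the ISS plant~\eqref{eq:initial_system}. Assumption~\ref{assumption:conditions} keeps $\mathbf{J}_{\mathbf{u}}^{+}$, and hence $\dot{\mathbf{u}}$, bounded, precluding finite escape, so composing the ISS estimates along the chain $\hat\theta\to\mathbf{u}\to\mathbf{x}$ shows that $h\to\mathbf{0}$ and $\mathbf{x}$ converges to the trajectory generated by $f^*$; i.e., the closed-loop properties of the Implicit Control are preserved. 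I expect the main obstacle to be the rigorous justification of this triple cascade: the adaptation error feeds the input through $\hat h$ and the input feeds the plant, so one must verify that each perturbation signal is well defined, bounded on the maximal interval of existence, and genuinely decaying, rather than growing fast enough to destabilize an inner loop before the outer one settles.
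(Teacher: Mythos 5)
Your overall route is the paper's: expand $d\tilde h/dt$ and $d\hat h/dt$ by the chain rule, substitute the adaptation law, show that the prescribed $\tilde h^*$ yields stable dynamics for $\tilde h$, and close with the ISS perturbation argument of Theorem~\ref{Theorem:ISS}. However, there is a genuine gap in the central computation. In the chain rule for $\hat h(\mathbf{x},\mathbf{u},\hat\theta)$ the state argument is the \emph{true} state, which evolves with $\dot{\mathbf{x}}=\theta f$, so the correct expansion is $d\hat h/dt=\hat{\mathbf{J}}_{\mathbf{x}}\dot{\mathbf{x}}+\hat{\mathbf{J}}_{\mathbf{u}}\dot{\mathbf{u}}+\tilde{\mathbf{J}}_{\theta}\dot{\hat\theta}$, not $\hat{\mathbf{J}}_{\mathbf{x}}\dot{\hat{\mathbf{x}}}+\hat{\mathbf{J}}_{\mathbf{u}}\dot{\mathbf{u}}+\tilde{\mathbf{J}}_{\theta}\dot{\hat\theta}$ as you write. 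With the correct expansion one obtains $d\hat h/dt=\hat{\mathbf{J}}_{\mathbf{x}}(\dot{\mathbf{x}}-\dot{\hat{\mathbf{x}}})-\tilde h^*$, hence $d\tilde h/dt=\dot h+\tilde h^*-\hat{\mathbf{J}}_{\mathbf{x}}(\dot{\mathbf{x}}-\dot{\hat{\mathbf{x}}})$, and substituting~\eqref{eq:right_hstar} gives exactly $d\tilde h/dt=h^*_{\theta}$ --- which is the entire purpose of the correction term in~\eqref{eq:right_hstar}. Your version, $d\tilde h/dt=\dot h+\tilde h^*$, instead leaves the spurious residual $\hat{\mathbf{J}}_{\mathbf{x}}(\dot{\mathbf{x}}-\dot{\hat{\mathbf{x}}})$ after substitution.

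The way you then dispose of that residual is also flawed, and the error is conceptual. In Section~\ref{sec:adaptation} the hat denotes dependence on $\hat\theta$, not the output of a state estimator: $\dot{\hat{\mathbf{x}}}=\hat\theta f(\mathbf{x},\mathbf{u})$, so $\dot{\mathbf{x}}-\dot{\hat{\mathbf{x}}}=\tilde\theta f=\tilde h$ identically, even under perfect state information. The residual is therefore $\hat{\mathbf{J}}_{\mathbf{x}}\tilde h$, a feedback term proportional to the very signal whose convergence you are trying to establish, not an exogenous input that ``vanishes as the state estimator converges''; invoking its decay to conclude $\tilde h\to\mathbf{0}$ is circular. (One could still argue stability of $\dot{\tilde h}=h^*_{\theta}+\hat{\mathbf{J}}_{\mathbf{x}}\tilde h$ by making $\mathbf{K}_{\theta}$ dominate $\hat{\mathbf{J}}_{\mathbf{x}}$, but that is a different and weaker statement than the proposition, which asserts that~\eqref{eq:right_hstar} recovers the nominal dynamics exactly.) Your closing cascade, and the concern about boundedness and finite escape, are in the right spirit of the paper's final perturbation argument, but they rest on the identity $\dot{\tilde h}=h^*_{\theta}$ that your computation does not actually deliver; the paper additionally records the companion identities $d\hat h/dt=\dot h-h^*_{\theta}$ and $dh/dt=\mathbf{J}_{\mathbf{x}}(\dot{\mathbf{x}}-\dot{\hat{\mathbf{x}}})+h^*$ to make the ``recover Theorem~\ref{Theorem:ISS} once $\tilde h=\mathbf{0}$'' step explicit.
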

\begin{proof}
The proof follows some of the steps in Theorem~\ref{Theorem:ISS}. Let consider the chain rule expansion of $d\tilde{h}/dt$,
\begin{equation}\label{eq:der_h_theta}
    \frac{d \tilde{h}}{d t} = \frac{\partial \tilde{h}}{\partial \mathbf{x}}\dot{\mathbf{x}} + \frac{\partial \tilde{h}}{\partial \mathbf{u}}\dot{\mathbf{u}} + \frac{\partial \tilde{h}}{\partial \tilde{\theta}}\dot{\tilde{\theta}}.
\end{equation}
Since $\theta$ is constant, $\dot{\tilde{\theta}} = -\dot{\hat{\theta}}$ and the substitution of Eq.~\eqref{eq:adaptive_law} gives
\begin{equation}\label{eq:der_h_theta2}
    \frac{d \tilde{h}}{d t} = {\mathbf{J}}_{\mathbf{x}}\dot{\mathbf{x}} - \hat{\mathbf{J}}_{\mathbf{x}}\dot{\mathbf{x}} + {\mathbf{J}}_{\mathbf{u}}\dot{\mathbf{u}} - \hat{\mathbf{J}}_{\mathbf{u}}\dot{\mathbf{u}}  + \tilde{h}^* + \hat{\mathbf{J}}_{\mathbf{x}} \dot{\hat{\mathbf{x}}} +  \hat{\mathbf{J}}_{\mathbf{u}} \dot{{\mathbf{u}}}.
\end{equation}
Note that $ {\mathbf{J}}_{\mathbf{x}}\dot{\mathbf{x}} + {\mathbf{J}}_{\mathbf{u}}\dot{\mathbf{u}}$ is equal to $\dot{h}$, and the $\hat{\mathbf{J}}_{\mathbf{u}} \dot{{\mathbf{u}}}$ terms cancel out. Therefore, Eq.~\eqref{eq:der_h_theta2} can be written as
\begin{equation}\label{eq:der_h_theta4}
    \frac{d \tilde{h}}{d t} = \dot{h} + \tilde{h}^* - \hat{\mathbf{J}}_{\mathbf{x}}(\dot{\mathbf{x}}-\dot{\hat{\mathbf{x}}}) = h^*_{\theta},
\end{equation}
where the later follows from the definition of $\tilde{h}^*$ in Eq.~\eqref{eq:right_hstar}.
Now, let consider the chain rule expansion of $d\hat{h}/dt$,
\begin{equation}\label{eq:der_h_theta_hat}
    \frac{d \hat{h}}{d t} = \frac{\partial \hat{h}}{\partial \mathbf{x}}\dot{\mathbf{x}} + \frac{\partial \hat{h}}{\partial \mathbf{u}}\dot{\mathbf{u}} + \frac{\partial \hat{h}}{\partial \hat{\theta}}\dot{\hat{\theta}}.
\end{equation}
If we use the definitions of $\dot{\mathbf{u}}$ and $\dot{\hat{\theta}}$ in Eq.~\eqref{eq:diff_eq_adapt}, Eq.~\eqref{eq:der_h_theta_hat} reads
\begin{equation}\label{eq:der_h_theta_hat2}
    \frac{d \hat{h}}{d t} = \hat{\mathbf{J}}_{\mathbf{x}}\dot{\mathbf{x}} + \hat{h}^* - \hat{\mathbf{J}}_{\mathbf{x}} \dot{\hat{\mathbf{x}}} - \tilde{h}^* - \hat{\mathbf{J}}_{\mathbf{x}} \dot{\hat{\mathbf{x}}} - \hat{h}^* + \hat{\mathbf{J}}_{\mathbf{x}} \dot{\hat{\mathbf{x}}}.
\end{equation}
By exploiting $\tilde{h}^*$ in Eq.~\eqref{eq:right_hstar}, we get
\begin{equation}\label{eq:der_h_theta_hat3}
    \frac{d \hat{h}}{d t} = \dot{h} -  h^*_{\theta}.
\end{equation}
Finally, let consider the chain rule expansion of $d h/dt$
\begin{equation}\label{eq:der_h_theta_h}
    \frac{d {h}}{d t} = \frac{\partial {h}}{\partial \mathbf{x}}\dot{\mathbf{x}} + \frac{\partial {h}}{\partial \mathbf{u}}\dot{\mathbf{u}} + \frac{\partial {h}}{\partial {\theta}}\dot{{\theta}} = {\mathbf{J}}_{\mathbf{x}}\dot{\mathbf{x}} + {\mathbf{J}}_{\mathbf{u}}\dot{\mathbf{u}}.
\end{equation}
The substitution of $\dot{\mathbf{u}}$ in Eq.~\eqref{eq:diff_eq_adapt} gives
\begin{equation}\label{eq:der_h_theta_h2}
    \frac{d {h}}{d t} = {\mathbf{J}}_{\mathbf{x}}\dot{\mathbf{x}} + {\mathbf{J}}_{\mathbf{u}}\hat{\mathbf{J}}_{\mathbf{u}}^{+}
    \kern -3pt\left(\hat{h}^*-\hat{\mathbf{J}}_{\mathbf{x}} \dot{\hat{\mathbf{x}}} \right) = {\mathbf{J}}_{\mathbf{x}} (\dot{\mathbf{x}} - \dot{\hat{\mathbf{x}}}) + h^* = {\mathbf{J}}_{\mathbf{x}} \dot{\tilde{h}} + h^*.
\end{equation}
With Eqs.~\eqref{eq:der_h_theta4},~\eqref{eq:der_h_theta_hat3} and~\eqref{eq:der_h_theta_h2} we use the same arguments in the proof of Theorem~\ref{Theorem:ISS}. Let $\tilde{h} \neq \mathbf{0}$ be considered as a perturbation. Due to Eq.~\eqref{eq:der_h_theta4}, $\tilde{h}$ converges to zero at some point. In the meantime, since the system is ISS, the stability remains. Besides, $\tilde{h}$ is stable, so the perturbation vanishes with time. Then, according to Eq.~\eqref{eq:der_h_theta_hat3}, $\dot{\hat{h}} = \dot{h}$ and, according to Eq.~\eqref{eq:der_h_theta_h2}, $\dot{h} = h^*$. Therefore, we recover the expanded system in Eq.~\eqref{eq:diff_eq_system_u} and the results of Theorem~\ref{Theorem:ISS} hold.
\end{proof}
\begin{proposition}
\label{proposition:u_dynamics}
Given Assumption~\ref{assumption:conditions}, 
\begin{equation}\label{eq:h_star_theta}
    h_{\theta}^*(\mathbf{x},\mathbf{u}) =
    - \mathbf{K}_{\theta} (\mathbf{x},\mathbf{u})\tilde{h}
\end{equation}
with $\mathbf{K}_{\theta}(\mathbf{x},\mathbf{u})$ a function which returns a positive definite matrix, and $\mathbf{K}(h(\mathbf{x},\mathbf{u}))$ defined as in~\eqref{eq:negative_matrix}, if 
\begin{equation}\label{eq:negative_matrix_adap}
    \kern -0.2cm \bar{\mathbf{K}} \kern -0.1cm = \kern -0.1cm \left.\begin{pmatrix}
    -\mathbf{K}_f(\mathbf{x}) \kern -0.1cm & \kern -0.1cm 0.5\mathbf{I} \kern -0.1cm & \kern -0.1cm \mathbf{0}
    \\
    0.5\mathbf{I} \kern -0.1cm & \kern -0.1cm -\mathbf{K}_h(h(\mathbf{x},\mathbf{u})) \kern -0.1cm & \kern -0.1cm -0.5{\mathbf{J}}_{\mathbf{x}}\mathbf{K}_{\theta}(\mathbf{x},\mathbf{u})
    \\
    \mathbf{0} \kern -0.1cm & \kern -0.1cm -0.5{\mathbf{J}}_{\mathbf{x}}\mathbf{K}_{\theta}(\mathbf{x},\mathbf{u}) \kern -0.1cm & \kern -0.1cm - \mathbf{K}_{\theta}(\mathbf{x},\mathbf{u}) 
    \end{pmatrix}\right.
\end{equation}
is negative definite, then the adaptive law preserves the stability properties of Theorem~\ref{Theorem:u_dynamics}.
\end{proposition}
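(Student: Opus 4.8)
The plan is to mirror the Lyapunov argument behind Theorem~\ref{Theorem:u_dynamics}, enlarging the Lyapunov candidate with a quadratic term in the adaptive error. I would take
\[
V = \frac{1}{2}\mathbf{x}^T\mathbf{x} + \frac{1}{2} h^T h + \frac{1}{2} \tilde{h}^T\tilde{h},
\]
which is positive definite and radially unbounded in the coordinates $(\mathbf{x},h,\tilde{h})$; under Assumption~\ref{assumption:conditions} (with $\mathbf{J}_{\mathbf{u}}$ and $\tilde{\mathbf{J}}_{\theta}=\hbox{diag}(f)$ of full rank) these are a well-posed reparametrization of the expanded state $(\mathbf{x},\mathbf{u},\hat{\theta})$ of~\eqref{eq:diff_eq_adapt}. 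The decisive simplification is that Proposition~\ref{proposition:ISS} has already done the algebra: under the choice of $\tilde{h}^*$ in~\eqref{eq:right_hstar}, the closed loop~\eqref{eq:diff_eq_adapt} satisfies $\dot{\tilde{h}} = h_\theta^*$ (Eq.~\eqref{eq:der_h_theta4}) and $\dot{h} = \mathbf{J}_{\mathbf{x}}\dot{\tilde{h}} + h^*$ (Eq.~\eqref{eq:der_h_theta_h2}), while $\dot{\mathbf{x}} = \theta f = h + f^*$ follows directly from~\eqref{eq:def_h_CDC} and~\eqref{eq:original_theta}. I would import these three identities verbatim rather than re-derive them.

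Next I would substitute the prescribed designs $f^* = -\mathbf{K}_f(\mathbf{x})\mathbf{x}$ from~\eqref{Eq:f_star_real}, $h^* = -\mathbf{K}_h(h)h$ from~\eqref{eq:h_star}, and $h_\theta^* = -\mathbf{K}_\theta(\mathbf{x},\mathbf{u})\tilde{h}$ from~\eqref{eq:h_star_theta}, obtaining $\dot{\mathbf{x}} = h - \mathbf{K}_f\mathbf{x}$, $\dot{\tilde{h}} = -\mathbf{K}_\theta\tilde{h}$ and $\dot{h} = -\mathbf{J}_{\mathbf{x}}\mathbf{K}_\theta\tilde{h} - \mathbf{K}_h h$. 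Differentiating $V$ along the closed loop then gives
\[
\dot{V} = \mathbf{x}^T h - \mathbf{x}^T\mathbf{K}_f\mathbf{x} - h^T\mathbf{K}_h h - h^T\mathbf{J}_{\mathbf{x}}\mathbf{K}_\theta\tilde{h} - \tilde{h}^T\mathbf{K}_\theta\tilde{h}.
\]
Collecting these terms into a quadratic form in $\left(\mathbf{x}^T\ h^T\ \tilde{h}^T\right)$ — splitting $\mathbf{x}^T h$ symmetrically into the two $0.5\mathbf{I}$ off-diagonal blocks and symmetrizing the cross-coupling term $-h^T\mathbf{J}_{\mathbf{x}}\mathbf{K}_\theta\tilde{h}$ into the $(2,3)$ and $(3,2)$ blocks — reproduces exactly $\dot{V} = \left(\mathbf{x}^T\ h^T\ \tilde{h}^T\right)\bar{\mathbf{K}}\left(\mathbf{x}^T\ h^T\ \tilde{h}^T\right)^T$ with $\bar{\mathbf{K}}$ as in~\eqref{eq:negative_matrix_adap} (with the top-left $2\times2$ block being the $\mathbf{K}$ of Theorem~\ref{Theorem:u_dynamics}). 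Hence $\bar{\mathbf{K}}$ negative definite makes $\dot{V}$ negative definite, so the origin of~\eqref{eq:diff_eq_adapt} is GAS and $\mathbf{x},h,\tilde{h}\to\mathbf{0}$, recovering the desired dynamics~\eqref{eq:def_f_star}, exactly as in Theorem~\ref{Theorem:u_dynamics}.

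The step I expect to be the main obstacle is justifying the clean closed-loop identities $\dot{\tilde{h}} = h_\theta^*$ and $\dot{h} = \mathbf{J}_{\mathbf{x}}\dot{\tilde{h}} + h^*$: this is precisely the cancellation of the estimated-Jacobian terms engineered by the choice of $\tilde{h}^*$ in Proposition~\ref{proposition:ISS}, so rather than reworking it I would cite that result, while checking that the reparametrization $(\mathbf{x},\mathbf{u},\hat{\theta})\mapsto(\mathbf{x},h,\tilde{h})$ is well defined on the operating domain. Two secondary caveats are worth stating: since $\bar{\mathbf{K}}$ depends on $\mathbf{x},\mathbf{u},h$ through $\mathbf{K}_f(\mathbf{x})$, $\mathbf{K}_h(h)$, $\mathbf{J}_{\mathbf{x}}$ and $\mathbf{K}_\theta(\mathbf{x},\mathbf{u})$, "negative definite" should be read as a uniform bound $\bar{\mathbf{K}}\preceq -c\mathbf{I}$, $c>0$, on that domain in order to conclude GAS rather than mere asymptotic stability; and, as usual in adaptive control, $\tilde{h} = \tilde{\theta}f\to\mathbf{0}$ does not force $\tilde{\theta}\to\mathbf{0}$ (that would require persistency of excitation), but this does not affect the herding objective.
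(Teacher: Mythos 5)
Your proposal is correct and follows essentially the same route as the paper's proof: the same augmented Lyapunov candidate $V = \tfrac{1}{2}\mathbf{x}^T\mathbf{x} + \tfrac{1}{2}h^Th + \tfrac{1}{2}\tilde{h}^T\tilde{h}$, the same importing of the closed-loop identities $\dot{\tilde{h}}=h^*_\theta$ and $\dot{h}=\mathbf{J}_{\mathbf{x}}\dot{\tilde{h}}+h^*$ from Proposition~\ref{proposition:ISS}, and the same collection of $\dot{V}$ into the quadratic form with $\bar{\mathbf{K}}$. Your two closing caveats (uniform negative definiteness of the state-dependent $\bar{\mathbf{K}}$ for a GAS conclusion, and $\tilde{h}\to\mathbf{0}$ not implying $\tilde{\theta}\to\mathbf{0}$ without persistency of excitation) are sound refinements that the paper leaves implicit.
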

\begin{proof}
The proof follows from Theorem~\ref{Theorem:u_dynamics} by considering 
\begin{equation}\label{eq:Ly_theorem_propo}
    V = \frac{1}{2}\mathbf{x}^T\mathbf{x} + \frac{1}{2}{h}^T{h} + \frac{1}{2}\tilde{h}^T \tilde{h}
\end{equation}
as a Lyapunov candidate function. Developing its derivative and using Eqs.~\eqref{eq:der_h_theta4} and~\eqref{eq:der_h_theta_h2} from Proposition~\ref{proposition:ISS}, we obtain
\begin{equation}\label{eq:Ly_theorem_der_3_propo1}
\begin{aligned}
    \dot{V} =& \mathbf{x}^T h + \mathbf{x}^T f^* + h^T \dot{h} + \tilde{h}^T \dot{\tilde{h}} =
    \\ &
    \mathbf{x}^T h - \mathbf{x}^T \mathbf{K}_f \mathbf{x} - h^T {\mathbf{J}}_{\mathbf{x}}\mathbf{K}_{\theta} \tilde{h}- h^T \mathbf{K}_h h - \tilde{h}^T \mathbf{K}_{\theta} \tilde{h}.
\end{aligned}
\end{equation}
As in Theorem~\ref{Theorem:u_dynamics}, the latter reads
\begin{equation}\label{eq:Ly_theorem_der_3_propo}
    \dot{V} = \begin{pmatrix} \mathbf{x}^T & {h}^T & \tilde{h}^T \end{pmatrix} \bar{\mathbf{K}} \begin{pmatrix} \mathbf{x}^T & {h}^T & \tilde{h}^T \end{pmatrix}^T.
\end{equation}

Then, if $\mathbf{K},\mathbf{K}_{\theta}$ are designed such that $\bar{\mathbf{K}}$ is negative definite, then the adaptation law preserves the stability properties of Theorem~\ref{Theorem:u_dynamics}.
\end{proof}

The adaptation law, then, evolves the estimates towards the values which ensure the correct behavior of the control. Remarkably, the convergence of $\tilde{h}$ to zero implies that $\hat{h}$ converges to $h$. Therefore, the convergence of $\hat{h}$ to zero implies the convergence of $h$ to zero, and the evaders behave as desired. 
In practice, to implement the controller, a slight change in Eq.~\eqref{eq:diff_eq_adapt} is required to compute $\tilde{h}^*$. Since $\dot{\mathbf{x}}=\theta f$ is not available, an approximation (e.g., by finite differences) must be used to obtain $\dot{\mathbf{x}}$. Note that this only affects the implementation, as long as the stability analysis involves the real dynamics of the evaders.

\textbf{Example.} We provide a brief description of how to design the control and adaptation gains. 

The first step is to set the desired behavior of the herd. Theorem~\ref{Theorem:u_dynamics} and Proposition~\ref{proposition:u_dynamics} allow for several possible choices, but to keep it simple, consider a desired settling time of $1/\tau = 12$s and an exponential transient according to $m$ first order independent systems. Thus,
\begin{equation}
    f^*(\mathbf{x})=-\mathbf{K}_f(\mathbf{x})\mathbf{x}=-\mathbf{K}_f\mathbf{x} = -(3\tau\mathbf{I})\mathbf{x} = -0.25\mathbf{I}\mathbf{x}.
\end{equation}
Then, according to Theorem~\ref{Theorem:u_dynamics}, $\mathbf{K}_h (h(\mathbf{x},\mathbf{u}))$ in Eq.~\eqref{eq:negative_matrix} must be such that $\mathbf{K}$ is negative definite. To preserve simplicity, let $\mathbf{K}_h (h(\mathbf{x},\mathbf{u})) = \mathbf{K}_h$ be a constant matrix. Under these conditions, for example $\mathbf{K}_h = 50\mathbf{I}$ works.

In the same way, if an adaptation law is desired, then $\mathbf{K}_{\theta} (\mathbf{x},\mathbf{u})$ in Eq.~\eqref{eq:negative_matrix_adap} must be such that $\bar{\mathbf{K}}$ is negative definite. In the herding case, the speed of evaders and herders is bounded, so $\mathbf{J}_{\mathbf{x}}$ is bounded and there always exist a matrix $\mathbf{K}_{\theta} (\mathbf{x},\mathbf{u})$ that makes $\bar{\mathbf{K}}$ negative definite. To preserve simplicity again, let $\mathbf{K}_{\theta} (\mathbf{x},\mathbf{u}) = \mathbf{K}_{\theta}$ be a constant matrix. For instance, $\mathbf{K}_{\theta} = 200\mathbf{I}$ works in the examples of Section~\ref{sec:simulations}. 
In general, starting from $\mathbf{K}_f,$ choosing the other gains such that $||\mathbf{K}_f||<||\mathbf{K}_h||<||\mathbf{K}_{\theta}||$ will make the system stable.
Nevertheless, there are no particular restrictions on the gains, providing a lot of design possibilities. If needed/desired, $\mathbf{K}_f$ might not be a constant, nor $f^*$ a linear function of the state. Similarly, $\mathbf{K}_h$ can change with time to preserve the negative definiteness of $\mathbf{K}$, and the same applies to $\mathbf{K}_{\theta}$.


\section{Herding solution}\label{sec:distributed}

Implicit Control, along with the adaptation law, constitutes the core of our proposal.
Nonetheless, to have a complete herding solution, additional issues are considered.

\subsection{Distributed estimator}\label{subsec:dekf}

Up to now, the solution assumes that herders know the state and input of the system perfectly, i.e., the position of all the herders and evaders. 
This Subsection describes a distributed estimator that fulfills this assumption.
Particularly,
we develop an extended version of the Distributed Kalman Filter~\cite{olfati2007distributed} (E-DKF), tailored to work with the Implicit Control expanded dynamics. The objective is, for all the herders, to have an accurate estimation of evaders' and herders' position, aggregated in the vector $\xi = \begin{bmatrix}
    \mathbf{x}_{1}^{T} & \dots  & \mathbf{x}_{m}^{T} &
    \mathbf{u}_{1}^{T} & \dots  & \mathbf{u}_{n}^{T}
    \end{bmatrix}^{T}.$
Notice that, in contrast with typical DKFs, the control inputs must be included as quantities to be estimated.
In typical control solutions this is not possible because the input has no dynamics because it is just an algebraic expression. 

Implicit Control addresses this issue by providing the input dynamics. In particular, for the prediction step of the filter we consider
\begin{equation}\label{eq:system}
    \dot{\xi} = \left.\frac{\partial \Sigma(\xi)}{\partial \xi} \right|_{\xi=\xi(t)} + \mathbf{w},
\end{equation}
where $\Sigma(\xi)$ denotes the expanded system in~\eqref{eq:diff_eq_system_u} and $\mathbf{w} \sim \mathcal{N}(\mathbf{0},\mathbf{Q})$ is a zero mean Gaussian noise with covariance $\mathbf{Q}$. Note that the linearization in Eq.~\eqref{eq:system} is necessary because the overall expanded dynamics are strongly nonlinear, so we use $\frac{\partial \Sigma(\xi)}{\partial \xi}$ evaluated at $\xi=\xi(t)$. Since the evaders' dynamics are deterministic, $\mathbf{Q}$ will have small values, useful to mitigate the effects of the uncertainty during the adaptation law's transient.

For the filter update, it is important to note that the herders' measurements will depend on its position, i.e., a herder will only be able to measure the entities that are close to it. Therefore, the measurement equation is of the form
\begin{equation}\label{eq:distributedSensorModel}
    \mathbf{z}_i = \mathbf{H}_i(\xi)\xi + \mathbf{v}_i,
\end{equation}
where $\mathbf{z}_i$ is the measurement taken by herder $i$, $\mathbf{v}_i \sim \mathcal{N}(\mathbf{0},\mathbf{R}_i)$ is the measurement noise with covariance $\mathbf{R}_i$, and $\mathbf{H}_i$ depends on $\xi$. This dependency makes the dimension of $\mathbf{z}_i$ to change at each filter iteration, complicating the implementation. 

The solution lies in using an information form of the E-DKF so, at each instant, each herder has a vector of measurements and a vector of booleans. The latter says whether an entity has been sensed or not, whereas the former contains the actual measurements and zeroes for the rest of entities, i.e., $\mathbf{z}_i \in \mathbb{R}^{2m+2n}$.
Given the two vectors, the information form of the E-DKF permits to add up all the measurements in common state coordinates, only adding those measurements where the boolean vector is equal to one. Besides, communication among herders happens only if the distance between two herders is less than $d_c$. Thus, the estimator is \textit{input-dependent} in the topology as well.

It is noteworthy that the stochastic nature of the estimator can absorb discrepancies from the adaptation and the unmodeled perturbations. This, together with the information form of the filter, permits to run the control and the estimation at different frequencies. The control loop can be the fastest while the estimation runs slower to account for communication bandwidth issues. Regarding the influence of the estimator in the overall stability of the system, if the estimator is designed to converge sufficiently fast, from a practical point of view its influence in the controller can be overlooked, and the controller works with accurate states and inputs. Regarding global awareness, the estimator still assumes a common reference frame for all the agents (herders and evaders) as well as knowledge of their number. Nevertheless, there are works available that deal with the distributed computation of these quantities~\cite{Franceschelli_TRO_2013_CommonReference,Montijano_TCST_2014_DynamicOpinions}.

\subsection{Caging the evaders}\label{subsec:approaching}

To avoid escapes, herders must be close to the evaders before beginning the precision herding using Implicit Control. This is not a complete solution because robots might not be deployed in the proximity of the evaders. Therefore, we propose a caging stage to prevent these escapes. 

We consider a caging strategy based on the mean position and covariance of the herd. We denote them by $\mathbf{x}_{avg} := \frac{1}{m} \sum_j \mathbf{x}_j$ and $\mathbf{P}_{avg} := \frac{1}{m} \sum_j(\mathbf{x}_j-\mathbf{x}_{avg})(\mathbf{x}_j-\mathbf{x}_{avg})^T$. The combination of both gives a description of the herd in terms of an ellipsoid centered in $\mathbf{x}_{avg}$. The geometry of the setup for a 2D herding is depicted in Fig.~\ref{fig:approaching}. It can be observed that the ellipse (in red) delimits a region around the evaders (in green) which suggests a uniform distribution of the herders along the perimeter. However, these locations might be too close to the evaders. Therefore, it is convenient to use an augmented version of the ellipse, $\mu_1 \mathbf{P}_{avg}$ with $\mu_1 > 1$, to place the herders sufficiently far from the evaders. We do not make any assumptions about the initial distribution of the evaders. For instance, a very spread distribution of evaders with a large $\mathbf{P}_{avg}$ will only imply that it will take more time for the herders to surround the evaders. Moreover, $\mu_1$ can be tuned so that herders provoke weak repulsive forces in the evaders. During the caging, evaders will not escape, and afterwards, the herders will approach the evaders to provoke large enough repulsive forces to steer the herd. Anyway, this is transparent for both the caging and the precise herding stage. 

To solve the caging, we propose a controller, adapted from~\cite{Monti_CDC_2013_Entrapment}, that steers the herders towards an even distribution along the perimeter of $\mu_1 \mathbf{P}_{avg}$. The position of herder $i$, in polar coordinates with respect to the ellipse, is driven by 
\begin{align}
    \dot{\psi}_i &= (\psi_{i+1} - \psi_{i}) + (\psi_{i-1} - \psi_{i}) \label{eq:nominal_controller_psi}
    \\
    \dot{\rho}_i &= \rho_i - \rho_i^* \label{eq:nominal_controller_rho}
\end{align}
where $\psi_i$ and $\rho_i$ are the angular and radial positions, and $i+1$, $i-1$ denote the left and right closest herders. The policy is inspired in the well known $N$-bugs problem, that evenly distributes $N$ agents over a closed curve.
An even distribution of herders along the perimeter of an ellipse requires the herders to be distributed with a uniform angle. The agreement in the angle is given by Eq.~\eqref{eq:nominal_controller_psi}. Meanwhile, to each angle corresponds a certain radial position to place herders over the desired ellipse, given by $\rho_i^*$. Therefore, the controller of the radial position is just the linear feedback controller in Eq.~\eqref{eq:nominal_controller_rho}. For formal guarantees of convergence, we refer to~\cite{Monti_CDC_2013_Entrapment}. 

Herder $i$ only needs to measure the closest left and right neighbor. These commands are then translated to Cartesian coordinates. Under this control policy the herders arrive to equally-distributed positions along the perimeter of the ellipse, denoted by $\mathbf{u}^*_i$ and depicted in blue in Fig.~\ref{fig:approaching}. 

\vspace{-0.4cm}

\begin{figure}[!ht]
    \centering
        \includegraphics[width=0.7\columnwidth]{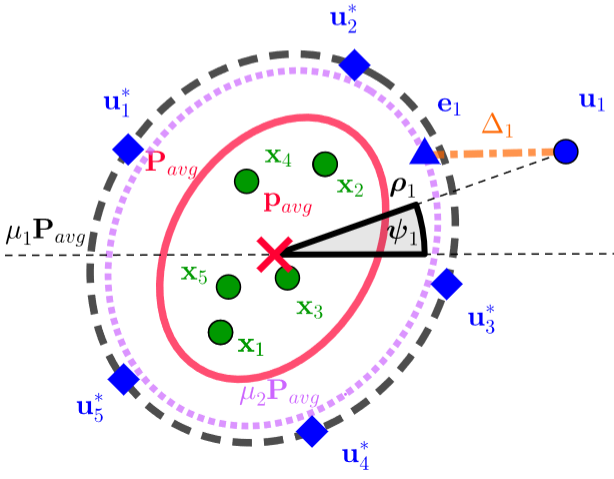}
	\caption{Geometrical view of the caging stage. The evaders (green dots) are in the region determined by their mean position and covariance (in red). To avoid escapes and high repulsive forces, the herder (blue dot) must be placed in the perimeter (blue diamonds) of an augmented ellipse (black-dashed). Besides, since there is uncertainty, CBFs prevent the herders to enter in the magenta ellipse, computing the closest point to its perimeter (blue triangle). }
	\label{fig:approaching}
\end{figure}

Due to perturbations or unmodeled phenomena, the caging might not prevent herders to move too close to the evaders. To solve this issue, we propose the use of CBFs~\cite{wang2017safety}. The controller in Eqs.~\eqref{eq:nominal_controller_psi}-\eqref{eq:nominal_controller_rho} computes the velocity $\dot{\mathbf{u}}_i^{nom}$, which is then translated into Cartesian coordinates. This is called the \textit{nominal action}. The nominal action might move the herder towards a region where the interaction with the evaders is not convenient. We define this region by $\mu_2 \mathbf{P}_{avg}$, where $\mu_1 > \mu_2 > 1$, and it is depicted in purple in Fig.~\ref{fig:approaching}.

To correct the nominal action, we use a quadratic program based on CBFs. In particular, at each instant, the closest point in the perimeter of the ellipse $\mu_2 \mathbf{P}_{avg}$ to herder $i$ is computed, denoted by $\mathbf{e}_i$~\cite{eberly2011distance}.
Then, an optimization process~\cite{wang2017safety} is carried out,
\begin{subequations}\label{eq:CBF}
\begin{alignat}{2}
&\:\:\:\:\:\:\:\dot{\mathbf{u}}_i^{cbf} = \underset{\dot{\mathbf{u}}_i^{cbf}}{\arg \min}  \:\: ||\dot{\mathbf{u}}_i^{cbf} - \dot{\mathbf{u}}_i^{nom}||^2  \label{eq:CBFobj}
\\
s.t. & \:\:\:\:\:\:\: L_f h(\Delta_i) + L_g h(\Delta_i) \dot{\mathbf{u}}_i^{cbf} \geq -\kappa(h(\Delta_i)) \label{eq:CBFconstraint}
\end{alignat}
\end{subequations}
Here, $\Delta_i = \mathbf{u}_i - \mathbf{e}_i$, depicted in orange in Fig.~\ref{fig:approaching}. Moreover, $\dot{\mathbf{u}}_i^{cbf}$ is the action to be optimized, $h(\Delta_i)$ is the control barrier function, $\kappa(h(\Delta_i))$ is a class $\mathcal{K}$ function and $L_f$, $L_g$ denote the Lie derivatives. It is important to remark that the abuse of notation with function $h$ is to fit the typical notation of CBFs~\cite{wang2017safety}.

Finally, a high-level algorithm of the complete herding sequence is described in Algorithm~\ref{al:with_Ad}. 

\begin{algorithm}
\caption{Complete herding for herder $i$}\label{al:with_Ad}
\begin{algorithmic}[1]
    \STATE Receive $\mathbf{x}_{avg}$ and $\mathbf{P}_{avg}$
    \WHILE{True}
        \STATE Compute estimate $\bar{\xi}_i$ with E-DKF
        \IF{ herders are far from evaders }
            \STATE \textbf{Do caging}: compute $\dot{\psi}_i$, $\dot{\rho}_i$, calculate $\dot{\mathbf{u}}_i^{cbf}$
        \ELSE 
            \STATE \textbf{Do precise herding}: compute $\dot{\mathbf{u}}_i$, $\dot{\hat{\theta}}_i$ with Adaptive Implicit Control, and update $\hat{\theta}_i$
        \ENDIF
        \STATE Update $\mathbf{u}_i$
    \ENDWHILE
\end{algorithmic}
\end{algorithm}


\section{Simulation results}\label{sec:simulations}

The purpose of these simulations is to validate and demonstrate the success of the proposal against challenging situations using the models in Section~\ref{sec:prosta}. The herding solution is formed by different parts, so the simulations will be conducted progressively to analyze each individual component. 

\subsection{Implicit Control and adaptation law}

The first case of study consists in the herding of 5 evaders by 5 herders. The herders receive perfect measurements and they are already deployed near the evaders. The details are in Table~\ref{table:initial_values}. We set $\mathbf{K}_f = 0.25 \mathbf{I}_{2 m}$, yielding a settling time of $12$s with an exponential transient according to $2m$ first order independent systems. To ensure that the conditions of Theorem~\ref{Theorem:u_dynamics} and Proposition~\ref{proposition:u_dynamics} hold, we set $\mathbf{K}_h = 50 \mathbf{I}_{2 m}$ and $\mathbf{K}_{\theta} = 200 \mathbf{I}_{2 m}$. For the sake of comparison, we use the \textit{Levenberg-Marquardt} numerical method~\cite{Marquardt1963LM} as a \textit{baseline} to compute the control input.
\vspace{-0.2cm}
\begin{table}[!ht]
\renewcommand{\arraystretch}{1.64}
\centering
\caption{List of symbols in the figures.}
\begin{tabular}{|c|c|}
    \hline
    Symbol
    &
    Meaning
    \\
    \hline
    \raisebox{-0.5\totalheight}{\includegraphics[width=0.10\columnwidth]{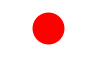}}
    &
    \raisebox{-0.2\totalheight}{{\footnotesize Desired positions}}
    \\
    \hline
    \begin{tabular}{c|c}
    Symbol
    &
    Meaning
    \\
    \hline
    \raisebox{-.5\totalheight}{\includegraphics[width=0.10\columnwidth]{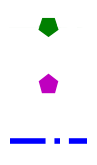}}
    &
    \begin{tabular}{c}
    {\footnotesize Initial position herders}
    \\
    {\footnotesize Final position herders}
    \\
    {\footnotesize Trajectories herders}
    \end{tabular}
    \end{tabular}
    &
    \begin{tabular}{c|c}
    Symbol
    &
    Meaning
    \\
    \hline
    \raisebox{-.5\totalheight}{\includegraphics[width=0.10\columnwidth]{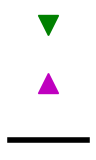}}
    &
    \begin{tabular}{c}
    {\footnotesize Initial position evaders}
    \\
    {\footnotesize Final position evaders}
    \\
    {\footnotesize Trajectories evaders}
    \end{tabular}
    \end{tabular}
    \\
    \hline
\end{tabular}
\label{table:legend}
\end{table}
\vspace{-0.4cm}
\begin{table}[!ht]
\centering
\caption{Parameters of the simulations.}
\begin{tabular}{c}
\begin{tabular}{|c|c|c|c|c|c|c|c|c|}
\hline
 $\theta_j$ (Inv.) &$\theta_j$ (Exp.)&$\beta_j$&$\sigma_j$& $d_{min}$ & $T$ & $v_{max}$ & $d_m$ & $d_c$ \\
\hline
$1.0$ &$0.5$ &$0.5$ & $2.0$ & $1.0$m & $10$ms & $0.4$m/s & $6.5$m & $6.5$m\\
\hline
\end{tabular}
\end{tabular}
\label{table:initial_values}
\end{table}

The first row of Fig.~\ref{fig:sim_results} shows the trajectories followed by herders and evaders for the different test cases. Implicit Control is able to herd the evaders successfully, achieving almost the same trajectories obtained by the numerical baseline in all the experiments. Conversely, the trajectories of the herders present some differences depending on the motion model, which highlights the complexity of the control problem at hand, greater for the Exponential Model due to the switching dynamics. The second row of Fig.~\ref{fig:sim_results} represents the evolution over time of the evaders' position error. Implicit Control achieves the desired settling time of $12$s and the state evolves as an exponential function according to the imposed closed-loop dynamics. Interestingly, it is corroborated that the Implicit Control solution does not assign any particular evader to any robotic herder, so herders just move to steer all the evaders towards their particular assigned positions \textit{simultaneously}.
\textit{All} herders are contributing at \textit{all} instants to the motion of \textit{all} evaders. But, to achieve a stable equilibrium, a team of herders is needed because otherwise the herding can not be simultaneous, as it is seen in the second row of Fig.~\ref{fig:sim_results}. At equilibrium, each herder is compensating the repulsive forces provoked by the other herders, reason why the evaders end up in the desired positions with zero velocity.

\begin{figure*}[!ht]
    \centering
    \begin{tabular}{cccc}
         \hspace{0.4cm}{\footnotesize Inverse, Implicit Control}
         &  
         \hspace{0.4cm}{\footnotesize Exponential, Implicit Control}
         &
         \hspace{0.4cm}{\footnotesize Inverse, numerical baseline}
         &
         \hspace{0.4cm}{\footnotesize Exponential, numerical baseline}
         \\\hspace{-0.35cm}\hspace{0.1cm}
         \includegraphics[width=0.21\textwidth]{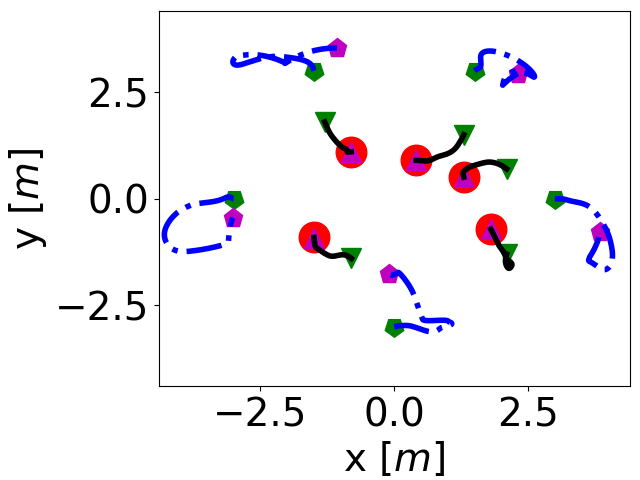}\hspace{0.1cm}
         &\hspace{-0.1cm}\hspace{0.1cm}
         \includegraphics[width=0.21\textwidth]{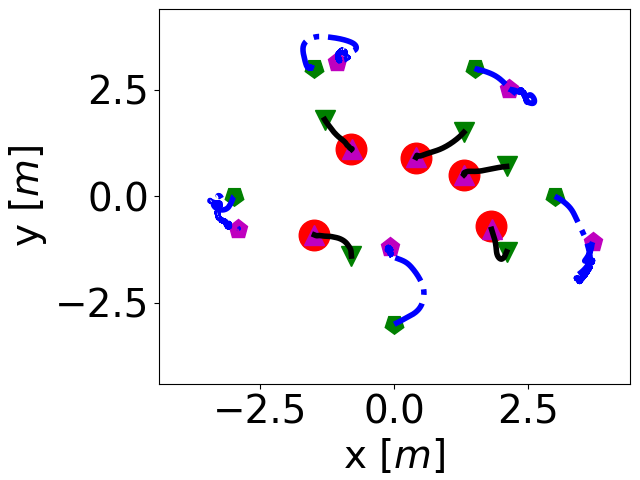}\hspace{0.1cm}
         &\hspace{-0.4cm}\hspace{0.1cm}
         \includegraphics[width=0.21\textwidth]{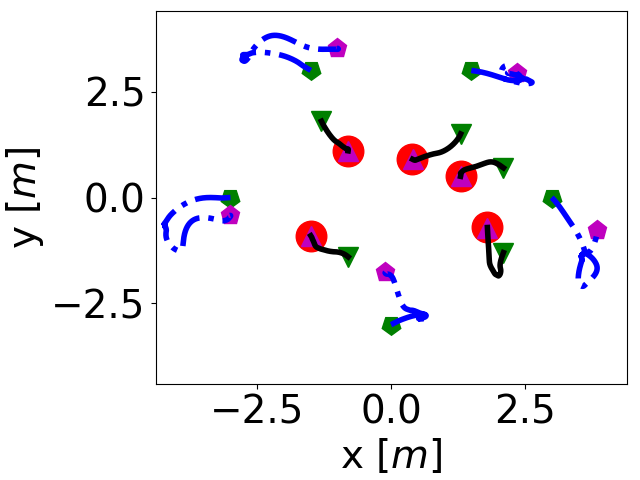}\hspace{0.1cm}
         &\hspace{-0.4cm}\hspace{0.1cm}
         \includegraphics[width=0.21\textwidth]{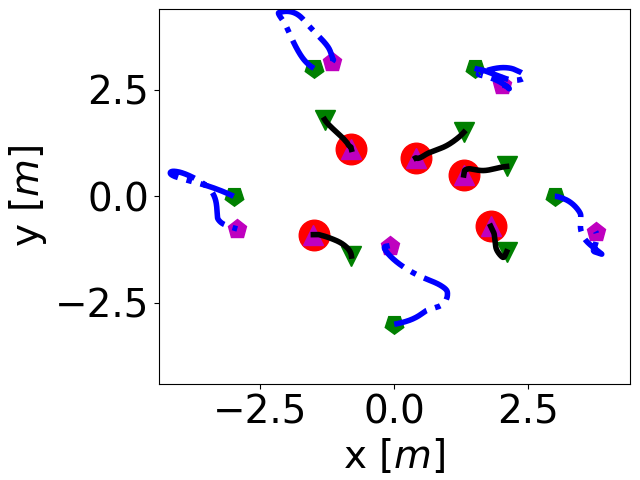}\hspace{0.1cm}
         \\\hspace{0.1cm}
         \includegraphics[width=0.21\textwidth]{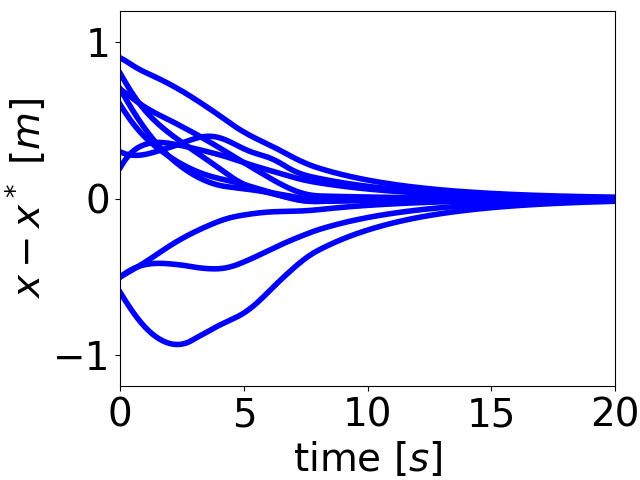}\hspace{0.1cm}
         &\hspace{0.1cm}
         \includegraphics[width=0.21\textwidth]{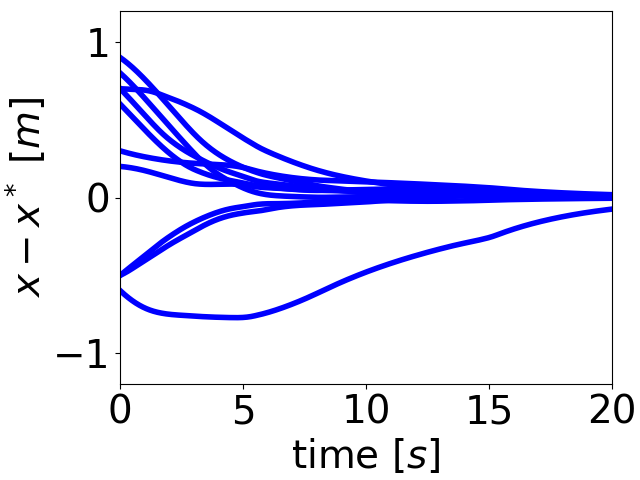}\hspace{0.1cm}
         &\hspace{0.1cm}
         \includegraphics[width=0.21\textwidth]{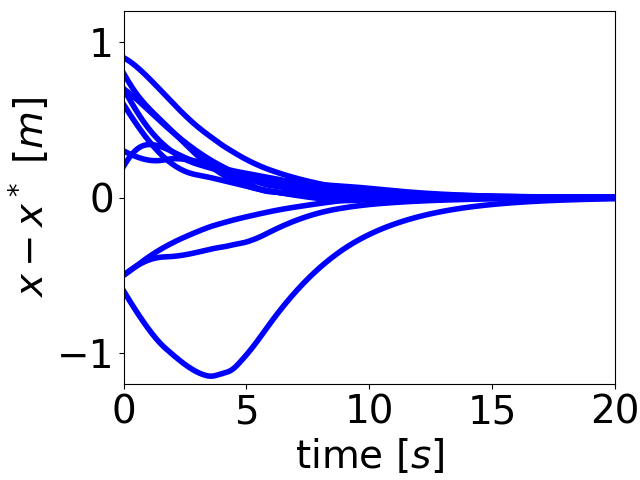}\hspace{0.1cm}
         &\hspace{0.1cm}
         \includegraphics[width=0.21\textwidth]{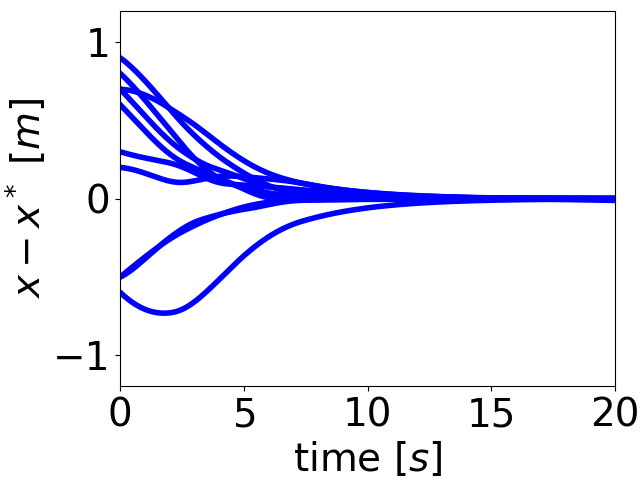}\hspace{0.1cm}
    \end{tabular}
	\caption{Simulation results of the herding of 5 evaders by 5 herders. The first row shows the trajectories followed by the herders and the evaders. The symbols are explained in Table~\ref{table:legend}. The second row presents the evolution of the error between desired and current position of the evaders.}
	\label{fig:sim_results}
	\vspace{-0.4cm}
\end{figure*}

\begin{figure}[!ht]
    \centering
    \begin{tabular}{cc}
         \hspace{0.2cm}{\footnotesize Inverse Model}
         &  
         \hspace{0.2cm}{\footnotesize Exponential Model}
         \\
        \includegraphics[width=0.46\columnwidth,height=0.30\columnwidth]{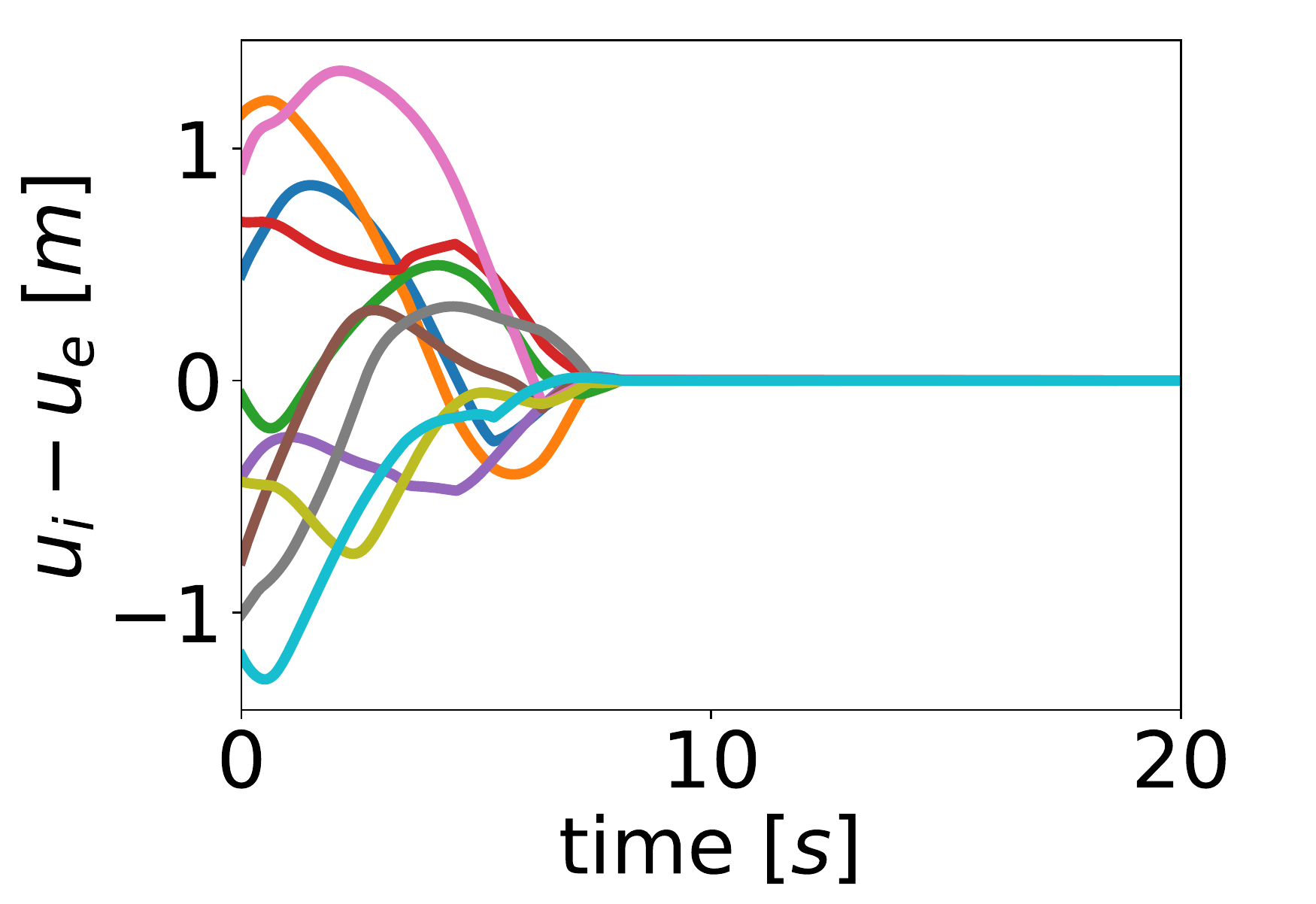}
         &  
        \includegraphics[width=0.46\columnwidth,height=0.30\columnwidth]{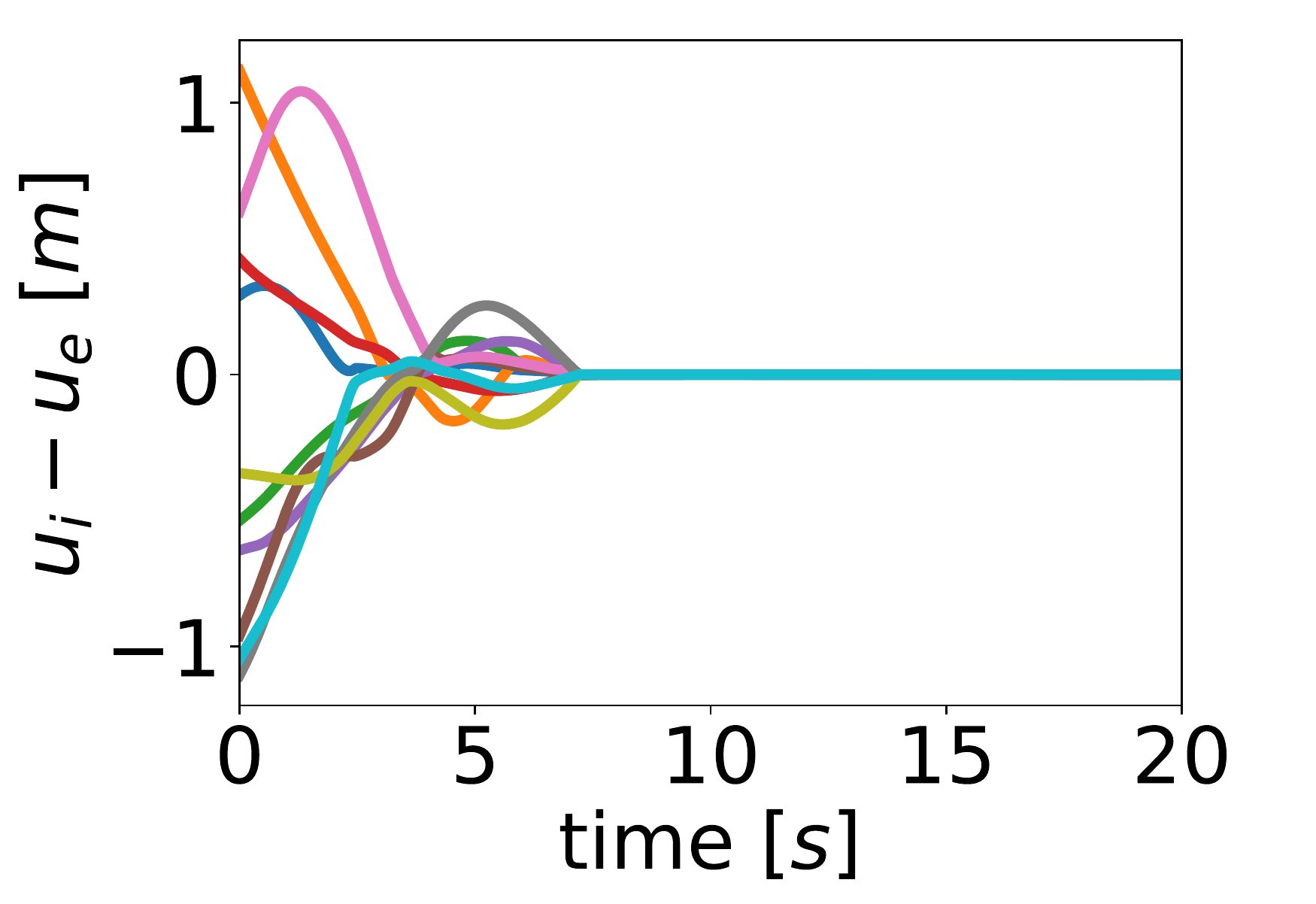}
    \end{tabular}
	\caption{Difference in the control input calculated by Implicit Control and the numerical baseline ($\mathbf{u}_i$ and $\mathbf{u}_e$ respectively). An arbitrary color has been assigned to each input for the sake of visibility.}
	\label{fig:diff_results}
\end{figure}

\begin{figure}[!ht]
    \centering
    \begin{tabular}{cc}
         \hspace{0.4cm}{\footnotesize Inverse evaders}
         &  
         \hspace{0.5cm}{\footnotesize Exponential evaders}
         \\
         \includegraphics[width=0.21\textwidth]{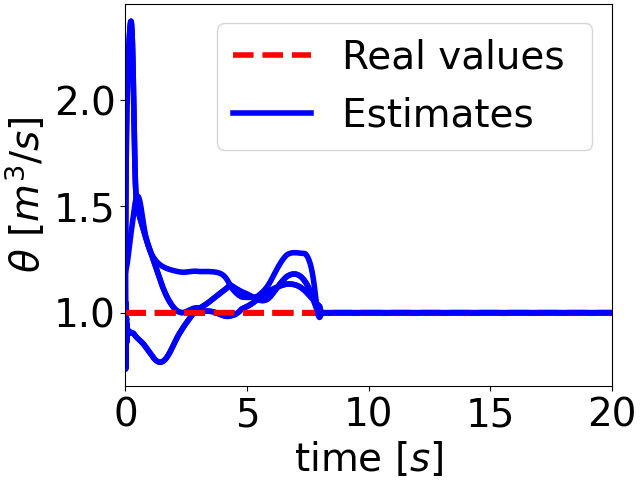}
         &
         \includegraphics[width=0.21\textwidth]{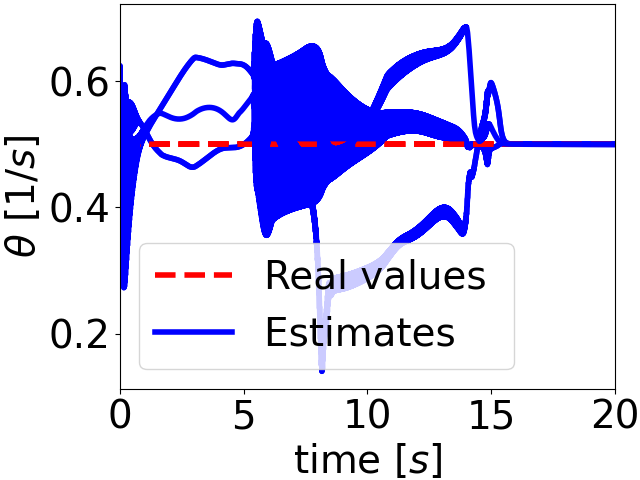}
    \end{tabular}
	\caption{Adaptation results of the herding of 5 evaders by 5 herders. The pannels depict the evolution of the estimated parameters.}
	\label{fig:adaptation_results}
\end{figure}

The only discrepancy between Implicit Control and the numerical baseline is in the first instants of the transient. The numerical baseline applies the ``correct input'' from the beginning, because at each instant it iterates until it finds the input that solves $f(\mathbf{x},\mathbf{u})-f^*(\mathbf{x}) = \mathbf{0}$.
Meanwhile, Implicit Control requires some time to converge to $f(\mathbf{x},\mathbf{u})-f^*(\mathbf{x}) = \mathbf{0}$, which is when the ``correct input'' is applied. Therefore, what only differs between methods is the shape of the transient in the first instants. Then, the total settling time with the Implicit Control is at most the settling time of $f^*(\mathbf{x})$ plus the time $h(\mathbf{x},\mathbf{u})$ takes to converge to zero. If $\mathbf{K}_h$ is large enough, then the latter is negligible. This is why the differences in the transient are minimal in the second row of Fig.~\ref{fig:sim_results}. Meanwhile, the differences in the input are much greater, as it is seen in Fig.~\ref{fig:diff_results}. The controlled system is strongly input-nonaffine, so there are many roots of $h$ that solve $h(\mathbf{x},\mathbf{u})=\mathbf{0}$. After Implicit Control converges to $h(\mathbf{x},\mathbf{u}) = \mathbf{0}$, herders are in a different configuration with respect to the herders driven by the numerical baseline, and this difference implies different roots for $h(\mathbf{x},\mathbf{u})$ until the evaders are close to the desired configuration.

Fig.~\ref{fig:adaptation_results} shows how the adaptation law adjusts the parameters to ensure stability. Initially, all the parameters are equal; however, they evolve differently because each evader is in a specific situation with respect to its desired position and the herders. At steady-state, the estimated parameters are equal to the real ones.

\subsection{Effects of the E-DKF in the herding}

The first series of simulations has validated the success of Implicit Control and the adaptation law. Next, we show how the herding works when perfect state and input knowledge is replaced by the distributed estimator. The communication and sensing thresholds are in Table~\ref{table:initial_values}. We set $\mathbf{Q} = 0.02\mathbf{I}_{2m+2n}$ and $\mathbf{R}_i = 0.07\mathbf{I}_{2m+2n}$ $\forall i$ to simulate a scenario with bad sensing capabilities, i.e., $\mathbf{R}_i \succ \mathbf{Q}$ to illustrate how incorporating Implicit Control in the prediction stage is useful. Finally, to be realistic in the use of the communication bandwidth, messages are exchanged every $100$ms, i.e., the communication process runs $10$ times slower than the control. Fig.~\ref{fig:sim_dekf} depicts the performance of the herders against the same herding scenario of Fig.~\ref{fig:sim_results}. The results are almost identical, with herders and evaders following similar trajectories as if there was perfect feedback. The reason for that is answered in the second row of Fig.~\ref{fig:sim_dekf}. The Root Mean Square Error (RMSE) between the estimates and real positions of the entities rapidly decreases to the noise level. We remark that both axis of the panels are in logarithmic scale. Thus, before the first $400$ms ($4$ communication rounds) the estimator has converged, which justifies the assumption of perfect knowledge in the control and adaptation. Despite the different frequencies between communication and control, the herders are successful because they can predict the movement of the evaders \textit{and} the other herders. 

\begin{figure}[!ht]
    \centering
    \begin{tabular}{cccc}
         \hspace{0.5cm}{\footnotesize Inverse evaders}
         &  
         \hspace{0.9cm}{\footnotesize Exponential evaders}
         \\\hspace{-0.35cm}\hspace{0.1cm}
         \includegraphics[width=0.19\textwidth]{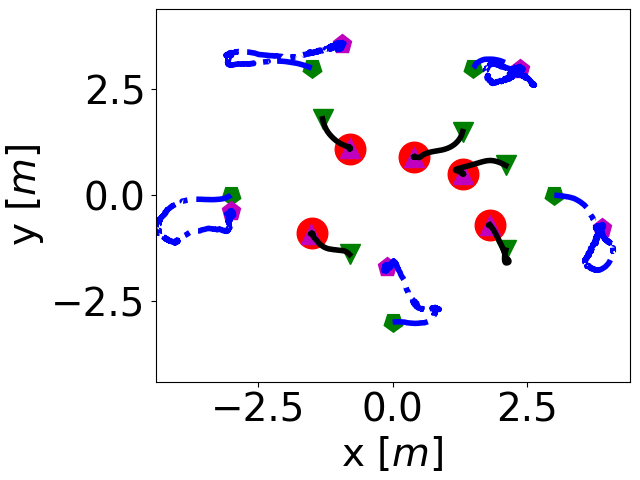}\hspace{0.1cm}
         &\hspace{-0.1cm}\hspace{0.1cm}
         \includegraphics[width=0.19\textwidth]{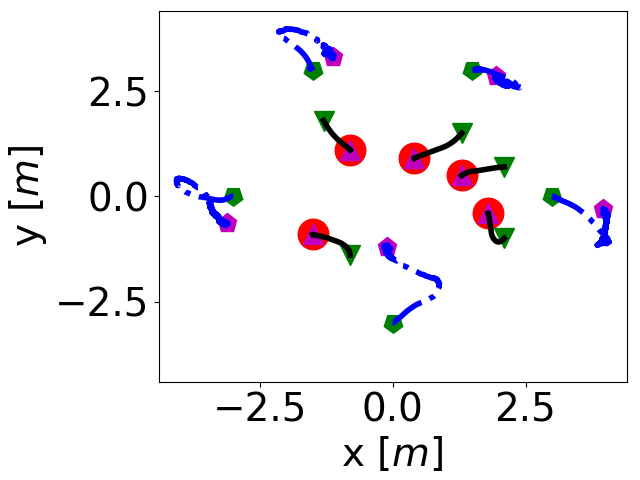}\hspace{0.1cm}
         \\\hspace{-0.35cm}\hspace{0.1cm}
         \includegraphics[width=0.19\textwidth]{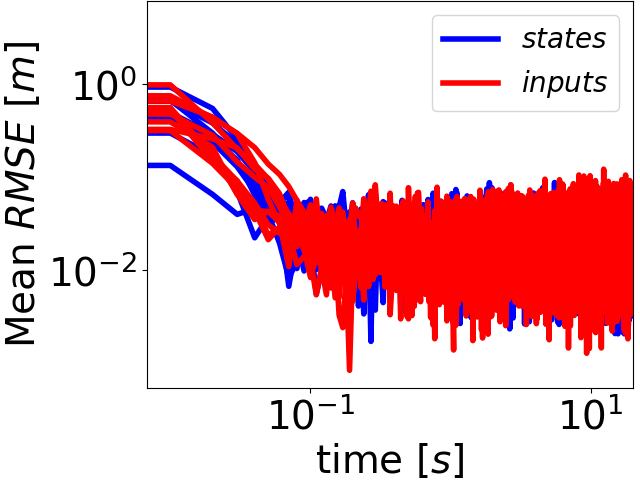}\hspace{0.1cm}
         &\hspace{0.1cm}
         \includegraphics[width=0.19\textwidth]{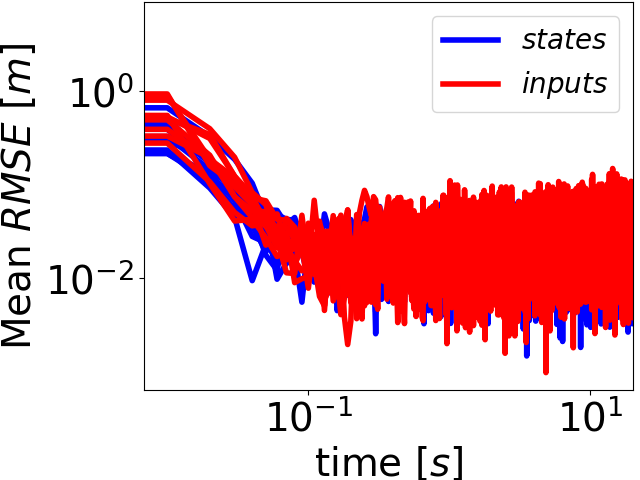}\hspace{0.1cm}
    \end{tabular}
	\caption{Simulation results of the herding of 5 evaders by 5 herders using Implicit Control with adaptation law and the distributed estimator. The first row shows the trajectories followed by the herders (same symbols of Table~\ref{table:legend}). The second row depicts the RMSE in the estimation of evaders' (blue) and herders' (red) position, averaged over the five herders.}
	\label{fig:sim_dekf}
\end{figure}

In this vein, it is interesting to analyze how increasing the noise covariance affects the performance. To do so, we have run the same simulations in Fig.~\ref{fig:sim_dekf} but changing the measurement covariance matrix, checking to what extent the noise can increase without affecting the performance. The results are in Table~\ref{table:noises}. Implicit Control is robust against measurement noise, achieving good performance with noises up to $0.3$m$^2$. We underline that this implies measurement errors in the order of $0.5$m, which is significant considering that the herding takes place in a $5\times5$m square (i.e., a $10\%$ of the side size of the arena). It is also observed that the tolerance to noise depends on the evaders' dynamics. On the other hand, we do not consider changes in $\mathbf{Q}$ because the movement of herders and evaders is deterministic, so $\mathbf{Q}$ just models the noise associated with the difference between real and estimated parameters during the adaptation.

\begin{table}[!ht]
\centering
\caption{Impact of the measurement noise in the herding.}
\begin{tabular}{|c|c|c|}
\hline
 \multirow{2}{*}{$\mathbf{R}$ [m$^{2}$]} & \multicolumn{2}{c|}{Steady-state error $||\mathbf{x}-\mathbf{x}^*||$ [m]} \\ \cline{2-3} & Inverse Model, $5$vs$5$ & Exponential Model, $5$vs$5$ \\
\hline
$0.07 \mathbf{I}_{20}$  &  $0.02$ & $0.01$
\\
\hline
$0.15 \mathbf{I}_{20}$  &  $0.06$ &  $0.06$ \\ 
\hline
$0.30 \mathbf{I}_{20}$  &  $1.34$ &  $0.12$ \\ 
\hline
\end{tabular}
\label{table:noises}
\end{table}

\vspace{-0.4cm}

\subsection{Complete precise herding}

The flexibility and generality of the solution can be extended to heterogeneous groups and time-varying references, resulting in a more realistic herding. In this example we evaluate the inclusion of the caging stage as considered in  Algorithm~\ref{al:with_Ad}. Fig.~\ref{fig:TV_example} shows how three herders herd a group of three evaders using the complete herding solution in Algorithm~\ref{al:with_Ad}. The red evader is Exponential while the purple evaders are Inverse. The CBFs have been tuned as follows, choosing
\begin{equation}\label{eq:h_cbf}
    h(\Delta_i) = ||\Delta_i|| + T \frac{\Delta_i \delta\mathbf{u}_i^{nom}}{||\Delta_i||} - \varphi
\end{equation}
and
\begin{equation}\label{eq:kappa_cbf}
    \kappa(h(\Delta_i)) = k_{cbf} h(\Delta_i)^3, 
\end{equation}
where $T$ is the sampling time, $\varphi=3$ is a desired distance with the perimeter of $\mu_2 \mathbf{P}_{avg}$ ($\mu_1=7$ and $\mu_2=4$) and $k_{cbf} = 50$ is a gain which modulates how strong is the repulsion provoked by the proximity with $\mathbf{e}_i$. We recall that in the caging stage the dynamics of the herders are
\begin{equation}\label{eq:update_cbf}
    \mathbf{u}_i = \mathbf{u}_i + T \delta \mathbf{u}_i^{cbf}.
\end{equation}

Meanwhile, the desired herding configuration evolves with
\begin{align*}
    \dot{x}^*_j = v_j^* ,  
    & &
    \dot{y}^*_j = 0.5 w_j^* \cos(w_j^* t + 2 \pi / j),
\end{align*}
where $\mathbf{w}^* \!\!=\!\! [0.05, 0.1, 0.02]$rad/s and $\mathbf{v}^*\!\! =\!\! [0.05, 0.05, 0.05]$m/s. Due to the distances involved in this herding instance, we set $d_m = d_c = 15$m.

Initially, the herders move to uniformly surround the evaders, positioning themselves on the perimeter of the desired ellipse. Once the herders are there, the precise herding begins, driving the evaders to their sinusoidal references. This yields to herders' trajectories surrounding and modulating the interaction forces with the evaders. With the evaders in their desired trajectories, the system reaches a steady-state behavior where the periodic movement of the evaders is shared by the herders. The clip of this simulation is included in the supplementary material. 
\begin{figure*}[!ht]
    \centering
    \begin{tabular}{ccccc}
        \hspace{0.5cm}{\footnotesize Global}
        &
        \hspace{0.65cm}{\footnotesize Caging}
        &
        \hspace{0.8cm}{\footnotesize Meantime snapshot}
        &
        \hspace{0.5cm}{\footnotesize Following desired trajectories}
        \\
        \includegraphics[width=0.22\textwidth,height=0.15\textwidth]{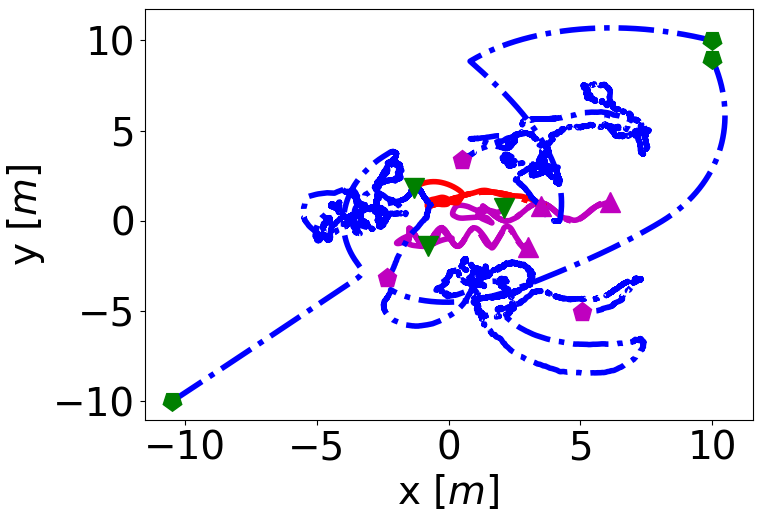}
        &
        \includegraphics[width=0.22\textwidth,height=0.15\textwidth]{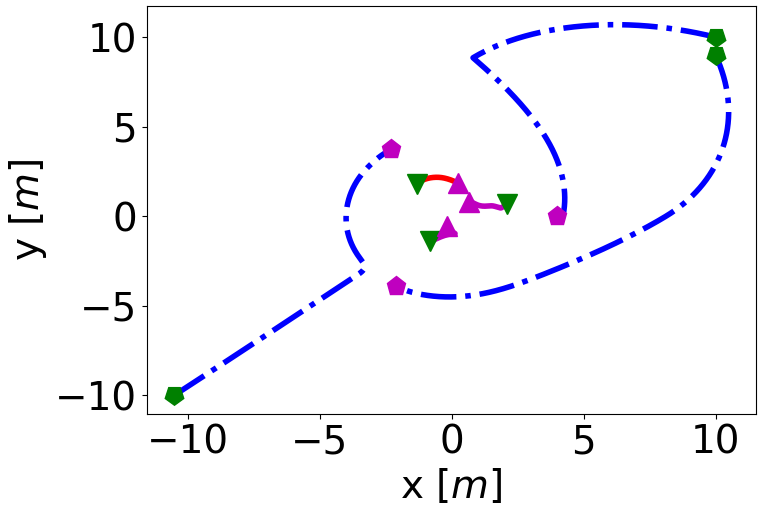}
        &
        \includegraphics[width=0.22\textwidth,height=0.15\textwidth]{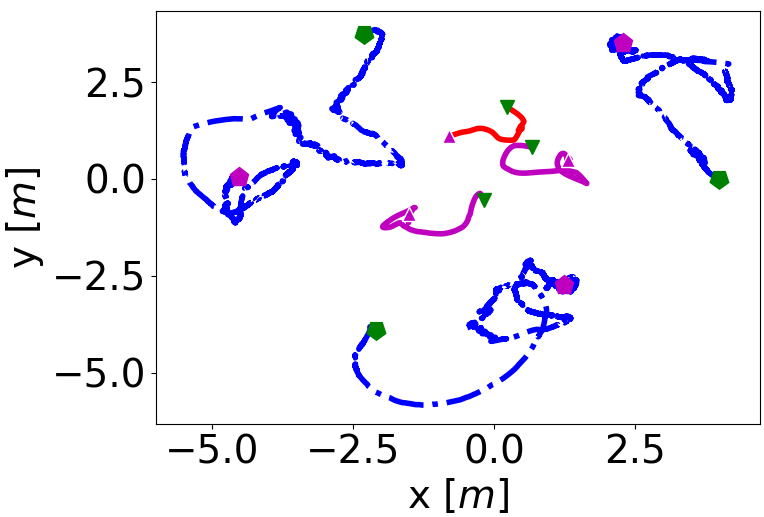}
        &
        \includegraphics[width=0.22\textwidth,height=0.15\textwidth]{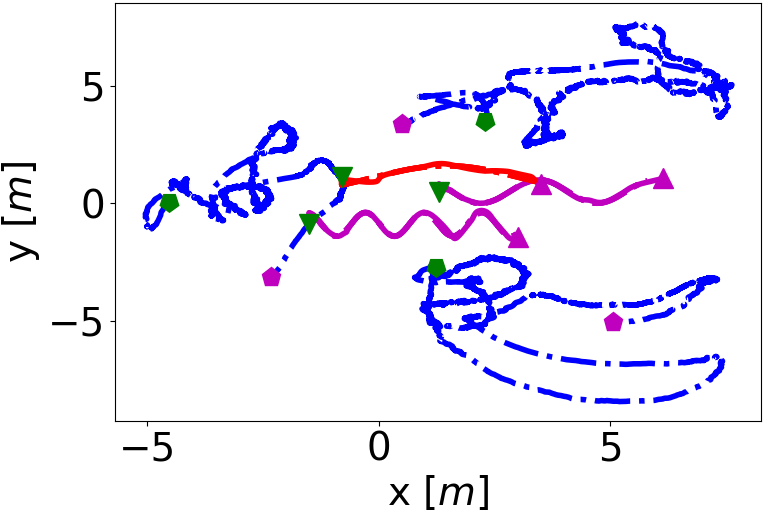}
    \end{tabular}
	\caption{Three robotic herders herding three heterogeneous evaders. The magenta trajectories correspond to Inverse evaders while the red trajectory is an Exponential evader. The other symbols follow the convention in Table~\ref{table:legend}.}
	\label{fig:TV_example}
\end{figure*}

\subsection{Controlling large herds with a few herders}

To demonstrate the generality and flexibility of Implicit Control, we conduct some experiments where Implicit Control is used to control a large number of evaders with a few herders. To do this, the control goal is to steer the \textit{centroid} of the herd towards a particular location. At each instant, the control algorithm first calculates the centroid of the herd and, then, it computes the input with Implicit Control, using the centroid as if it was a virtual evader with its corresponding evader's dynamics. The computation of the centroid induces a little noise in the dynamics of the virtual evader, but it is negligible because, point-wise, the centroid's dynamics is still that of the virtual evader. Besides, this discrepancy can be absorbed by the adaptation law. 

In Fig.~\ref{fig:sim_results3} it is shown how $5$ herders can control $50$ evaders. There are $10$ inputs and $2$ states to control (the $x$ and $y$ position of the centroid). 
To maintain cohesion of the herd, but also to avoid collisions among the evaders, we have added very weak repulsive and coalition forces among evaders
\begin{equation}\label{eq:additional_forces}
    \dot{\mathbf{x}}_j \kern -0.1cm = \kern -0.1cm f_j^{inv/exp}(\mathbf{x},\mathbf{u}) + \vartheta \sum_{j^{\prime}=1}^{m} \mathbf{d}_{jj^{\prime}}\left( \frac{1}{||\mathbf{d}_{jj^{\prime}}||^{3}} - ||\mathbf{d}_{jj^{\prime}}||^2\right) 
\end{equation}
with $\mathbf{d}_{jj^{\prime}} = \mathbf{x}_j - \mathbf{x}_{j^{\prime}}$ and $\vartheta = 2 \times 10^{-4}$, i.e., $4$ orders of magnitude smaller than the dynamics in Eqs.~\eqref{eq:PiersonBase} and~\eqref{eq:LicitraBase}.

The first row of Fig.~\ref{fig:sim_results3} shows how the five herders can steer the whole herd towards a certain region. As a curiosity, since the number of available control inputs is much greater than the number of states to control, there is a ``leftover'' herder in steady-state (right pannel):  there are two herders which share a similar position. More interestingly, the second row of Fig.~\ref{fig:sim_results3} shows how the herders can split the herd in two sub-herds and steer each of them, simultaneously, towards individually assigned locations. This confirms that, with our herding, it is possible to consider each evader to virtually represent a herd (see, e.g.,~\cite{Pierson_2018_TR_Herding} for a similar consideration).

\begin{figure*}[!ht]
    \centering
             \includegraphics[width=1\textwidth]{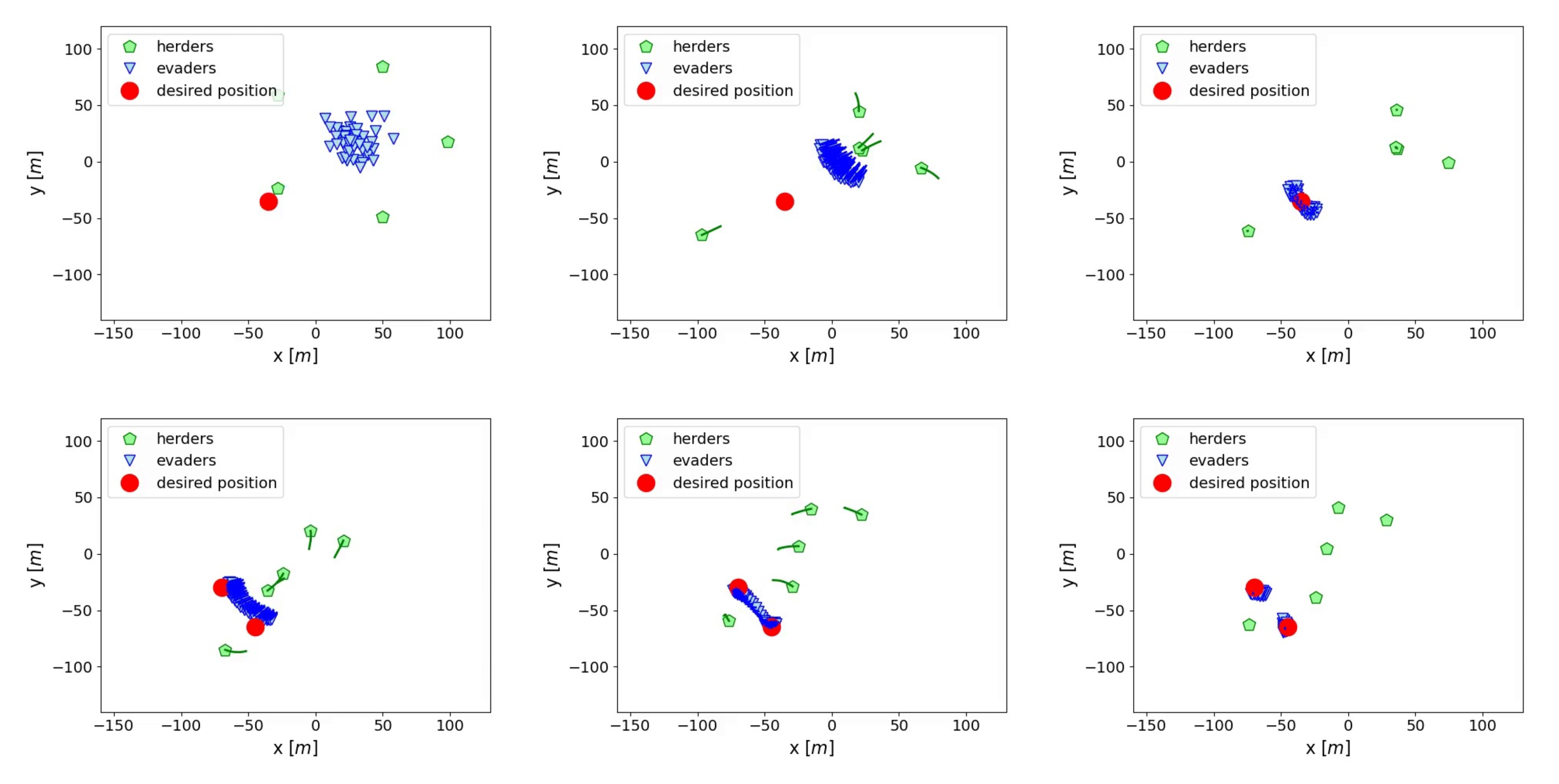}
	\vspace{-0.8cm}
	\caption{$5$ herders herding $50$ Inverse evaders. In the first $50$ seconds, the herd is steered towards the desired location (first row). Afterwards, the herd is split in two sub-herds and driven towards two different desired locations simultaneously (second row).}
	\vspace{-0.4cm}
	\label{fig:sim_results3}
\end{figure*}

The supplementary material includes a video with different configurations of evaders, demonstrating that Implicit Control can herd large heterogeneous groups of evaders while mixing Inverse and Exponential evaders.


\section{Experiments}\label{sec:experiments}

In this Section we extend the experiments to the real framework provided by the Robotarium~\cite{Wilson2020Robotarium}. To do so, some robots play the role of herders while the others act as evaders, following the dynamics in Section~\ref{sec:prosta}. The robots are GRITSBot X moving on a $3.2$m x $2$m area, coordinated by a central server which receives odometry and sends velocity commands at an approximately delay of $0.033$s. Thus, a low level controller is used to translate the control output into velocity commands, with $v_{max}=0.2$m/s. We use the experiments to validate Implicit Control, the adaptation law and the caging stage. Finally, we adjust some parameters to fit the conditions of the experiment: $T=0.033$s, $\theta = 0.02,$ $\hat{\theta}(0)=0.015$ (Inverse), $\theta=0.05,$ $\hat{\theta}(0)=0.04$ (Exponential), $\sigma = 1.2$. 
\begin{figure*}[!ht]
\centering
\begin{tabular}{cccc}
    {\footnotesize 3 herders vs 3 Inv. evaders} 
    & {\footnotesize 3 herders vs 3 Exp. evaders} 
    & {\footnotesize 3 herders vs 2 Inv. $+$ 1 Exp. evaders} 
    & {\footnotesize 4 herders vs 2 Inv. $+$ 2 Exp. evaders}
    \\\hspace{0.1cm}
    \includegraphics[width=0.21\textwidth]{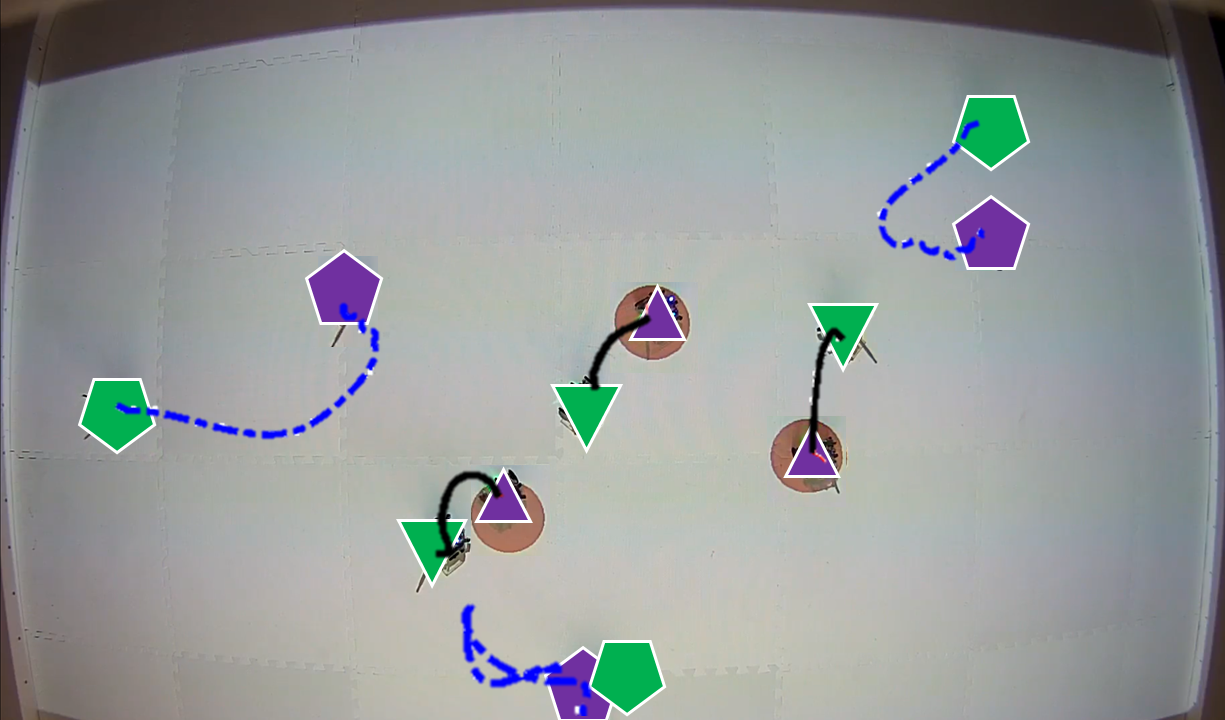}\hspace{0.1cm}
    & \hspace{0.1cm}
    \includegraphics[width=0.21\textwidth]{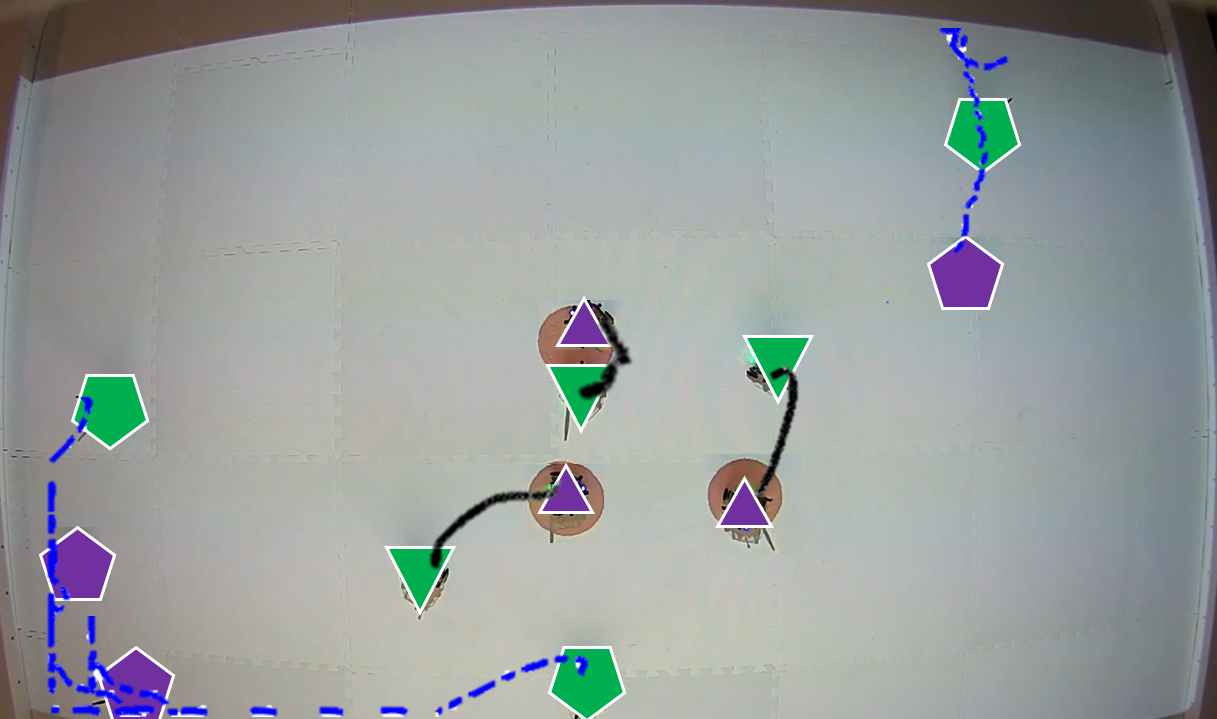}\hspace{0.1cm}
    &\hspace{0.1cm}
    \includegraphics[width=0.21\textwidth]{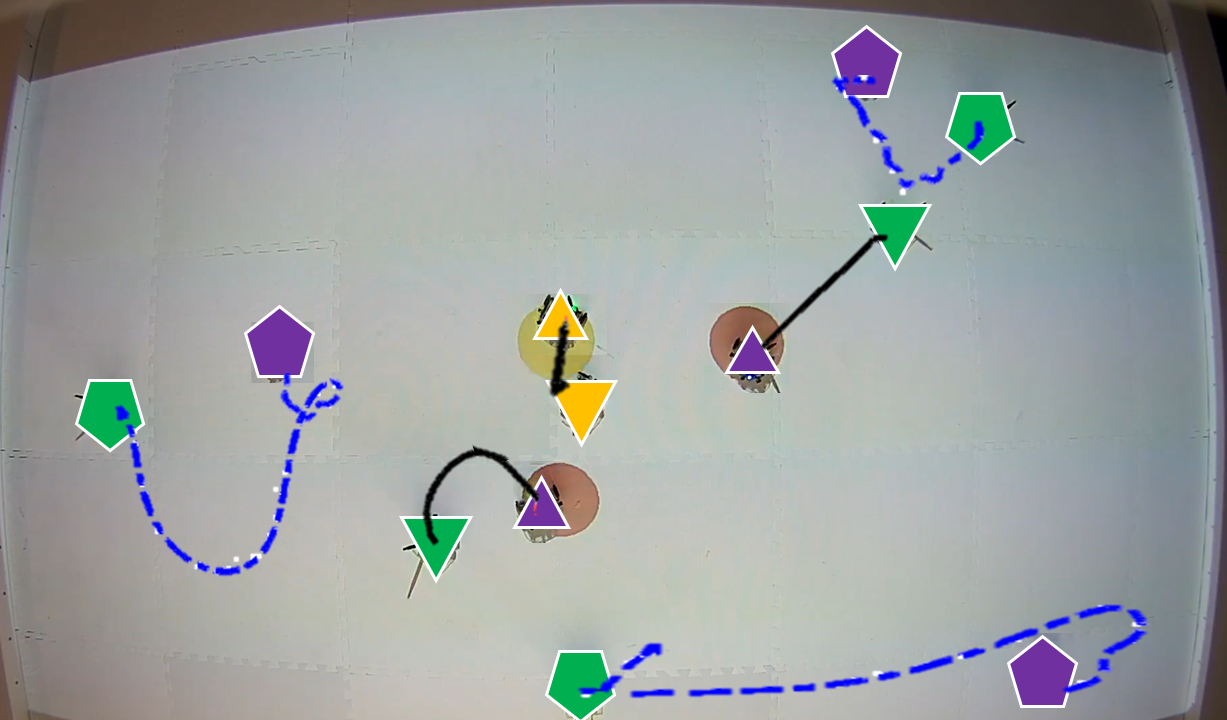}\hspace{0.1cm}
    &\hspace{0.1cm}
    \includegraphics[width=0.21\textwidth]{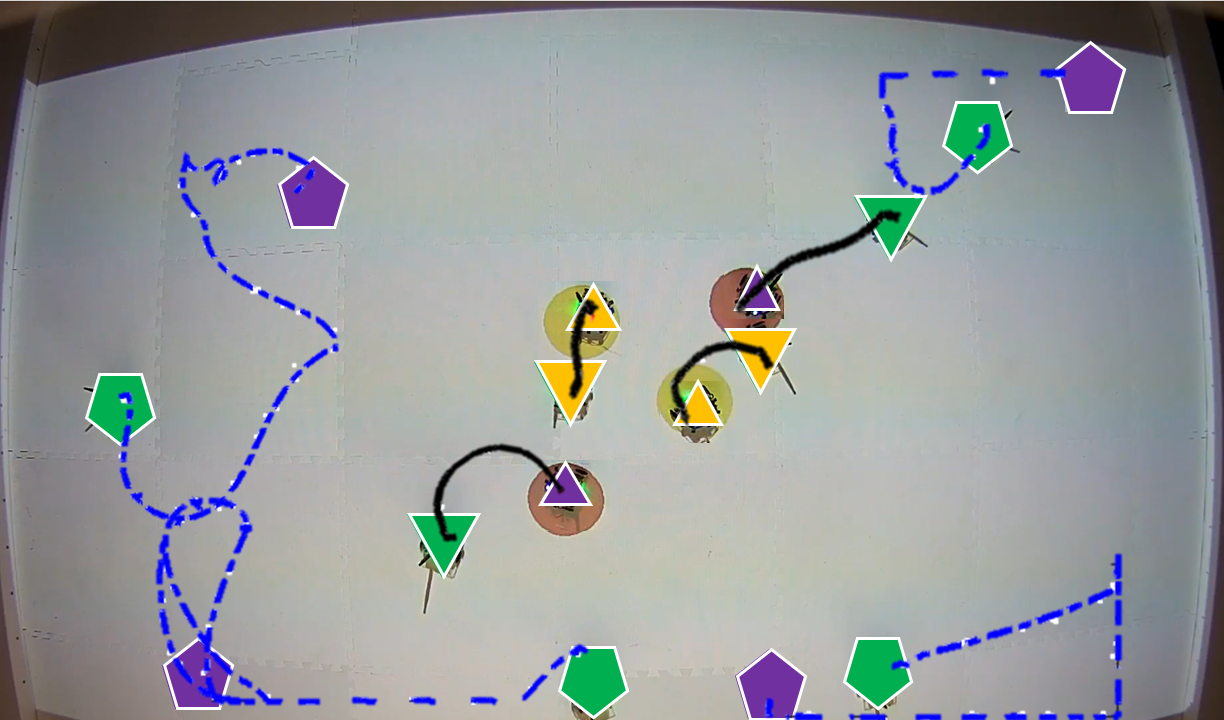}\hspace{0.1cm}
    \end{tabular}
\caption{Results of the herding using Robotarium~\cite{Wilson2020Robotarium}. The symbols are explained in Table~\ref{table:legend}. When evaders' dynamics are mixed, initial, final and desired position of Exponential evaders are yellow. A video with the complete experiments is included as supplementary material.
}
\label{fig:second_impression}
\end{figure*}
\begin{figure*}[!ht]
\centering
\begin{tabular}{cccc}
    {\footnotesize Initial configuration} 
    & {\footnotesize Caging} 
    & {\footnotesize Meantime snapshot} 
    & {\footnotesize Final trajectories}
    \\\hspace{0.1cm}
    \includegraphics[width=0.23\textwidth]{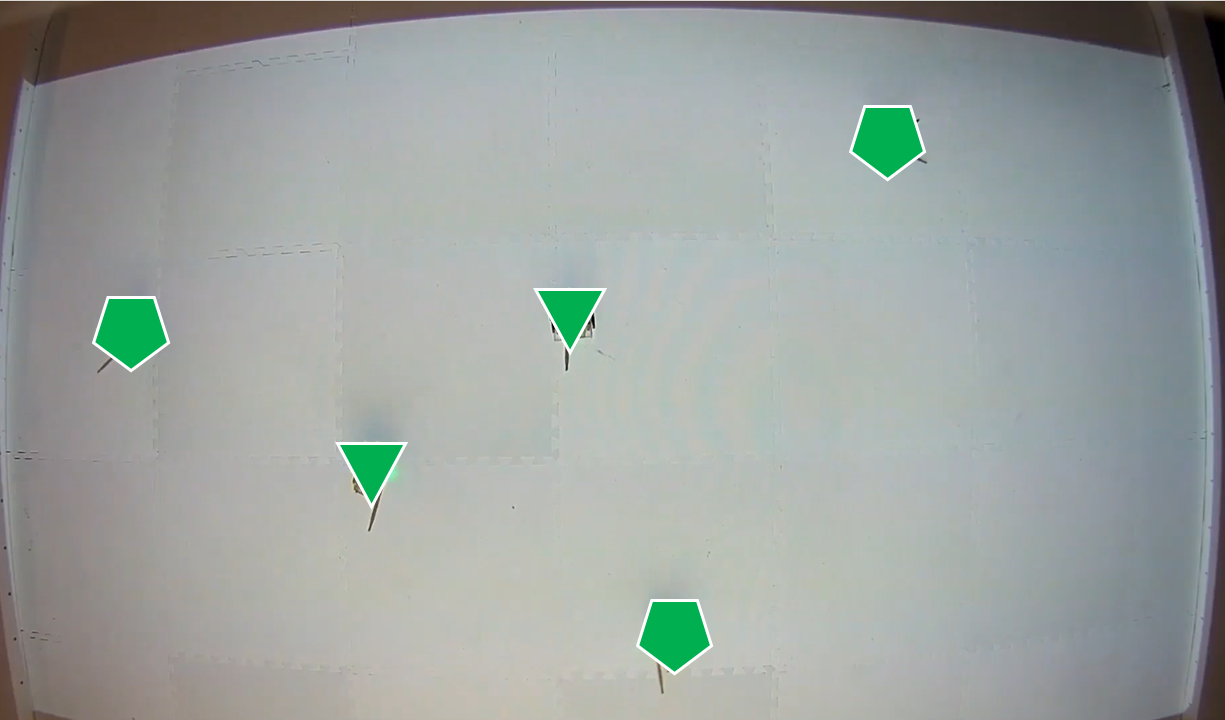}\hspace{0.1cm}
    & \hspace{0.1cm}
    \includegraphics[width=0.22\textwidth]{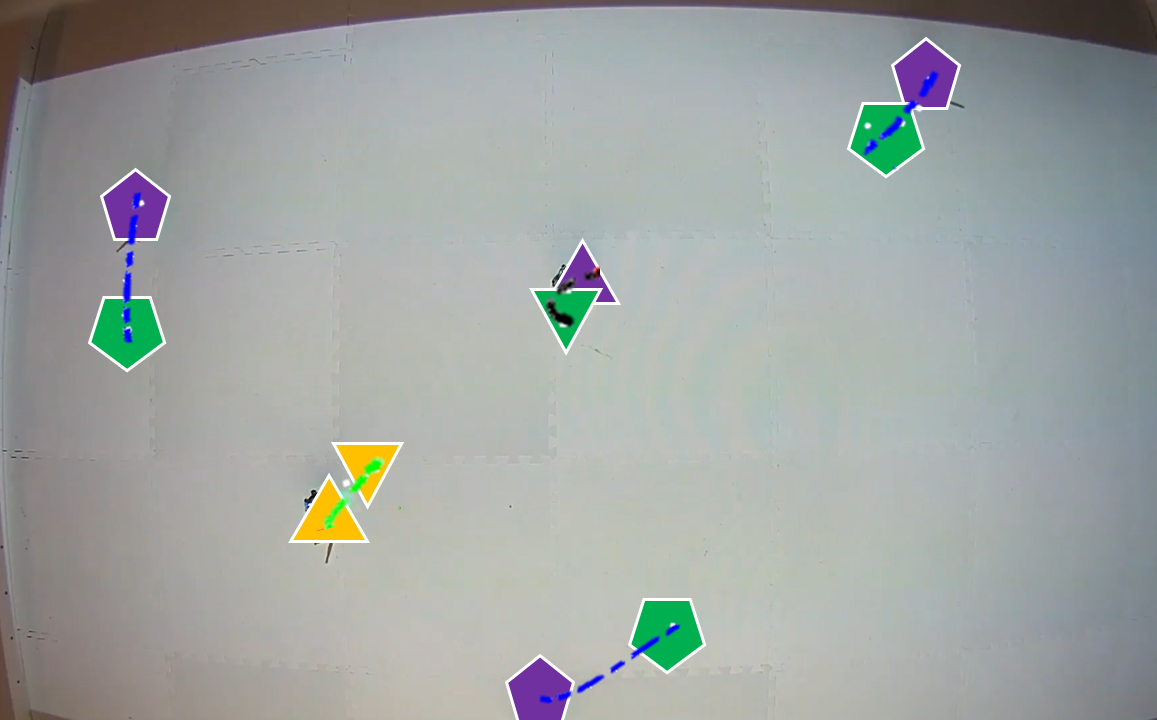}\hspace{0.1cm}
    &\hspace{0.1cm}
    \includegraphics[width=0.21\textwidth]{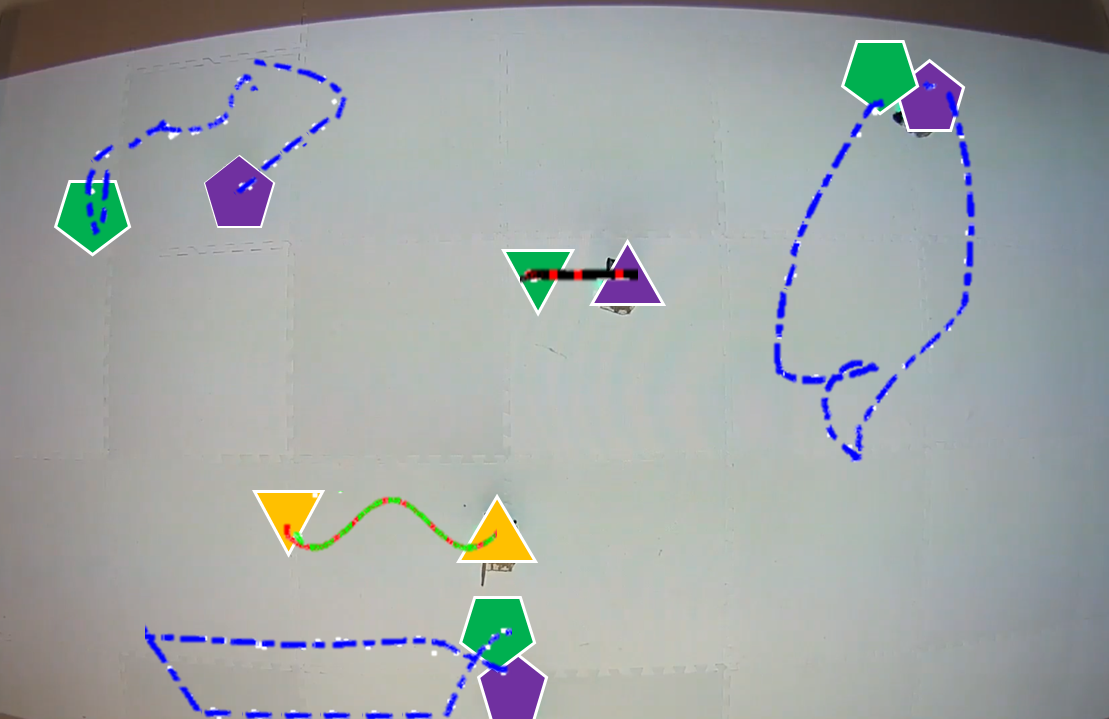}\hspace{0.1cm}
    &\hspace{0.1cm}
    \includegraphics[width=0.21\textwidth]{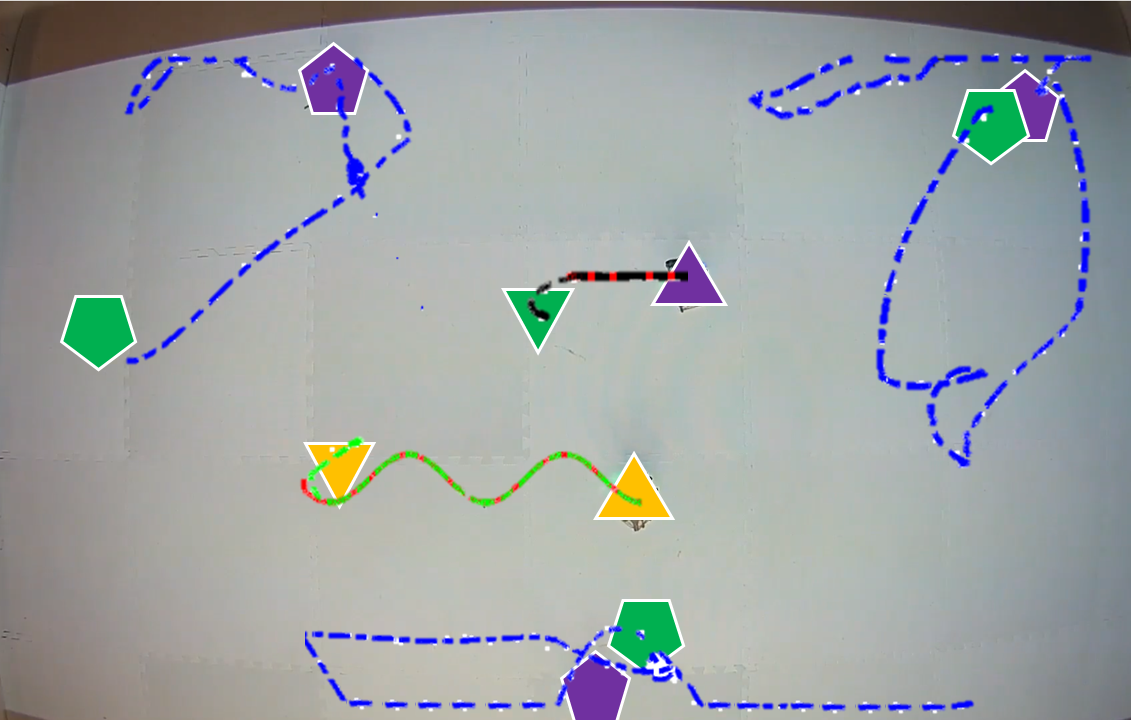}\hspace{0.1cm}
    \end{tabular}
\caption{Three herders herding two heterogeneous evaders in a time-varying path. More details in the supplementary video.
}
\vspace{-0.2cm}
\label{fig:robotarium_with_approaching}
\end{figure*}

The experiment in Fig.~\ref{fig:first_impression} exhibits a similar behavior to the simulations in Section~\ref{sec:simulations}. The evaders try to evade the herders, going in the direction of lower density of herders. To tackle this, the closest herders surround the evaders to align with the other herders, which move away to modulate the interaction forces. These conclusions are reaffirmed by the experiments in Fig.~\ref{fig:second_impression}, where different combinations of evaders, in their number and dynamics, are tested. 
Additionally, the herders successfully herd heterogeneous groups of evaders, as it is demonstrated in Fig.~\ref{fig:second_impression}c, Fig.~\ref{fig:second_impression}d, and in the time-varying experiment in Fig.~\ref{fig:robotarium_with_approaching}. 

The latter consists in the herding of an Inverse evader and an Exponential evader by three herders, where the caging stage has been included. Initially, the herders approach the evaders, surrounding them to avoid escapes. Then they steer the evaders towards the desired trajectories (in red). Despite the space limitations and the complex nonlinear repulsive dynamics, the evaders successfully follow the references. 

As a detail, in the video it is observed that, for the case of $4$ evaders vs $4$ herders, the two herders on the left side of the frame exchange their positions at a certain instant. This is because the control input is defined in terms of its dynamics, so it ``follows the flow of a numerical solver''. Therefore, herders may exchange their position because this is how the input dynamics is evolving towards the desired dynamics. To avoid this behavior in a particular application, a low-level controller with an obstacle-avoidance mechanism can be incorporated. Anyway, this is not a concern of Implicit Control but of the particular implementation in the application.

All the experiments are successful despite uncertainty due to the adaptation law. The supplementary video includes the complete clips of all the experiments.


\section{Conclusions}\label{sec:conclusion}

This paper has addressed a novel control strategy to solve the herding problem in MRS. This strategy, based on numerical analysis theory and coined as Implicit Control, finds suitable herding inputs even when, due to the complex nonlinearities of the herd, the control law is given by a set of implicit equations. Implicit Control develops a continuous-time expansion of the system, and comes with formal proofs of convergence. In addition, it is flexible in the number of evaders and general with respect to their motion model. The theoretical concepts of Implicit Control allow to derive an adaptation law. It preserves the stability properties of the control while naturally integrates in the control architecture.
To complete the proposal, we have developed a novel solution for the caging problem. The herders can depart from any arbitrarily far region because they are able to approach and surround the evaders. This is done in conjunction with a novel version of an Extended Distributed Kalman Filter, which has solved the problem of complete perfect measurements. The flexibility, robustness, and generality of the solution has been demonstrated in numerous simulations and experiments with real robots.

Implicit Control paves the way for novel research lines of potential interest, including the full distribution of the control computation, e.g., using distributed optimization techniques. Another interesting direction is to add learning tools to infer online not only linear-dependent parameters but also the nonlinear structure of the herd dynamics.

\bibliographystyle{IEEEtran}
\bibliography{IEEEabrv,IEEEexample.bib}


\vspace{-6.2cm}

\begin{IEEEbiography}[{\includegraphics[width=1in,height=1.25in,clip,keepaspectratio]{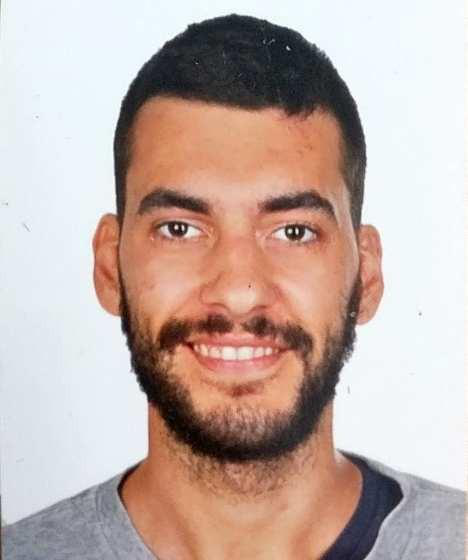}}]
{Eduardo Sebasti\'{a}n} (S'21) received the B.Eng. in Electronic and Automatic Engineering (Hons.) and the M.Eng. in Electronics (Hons.)  from the Universidad de Zaragoza in 2019 and 2020 respectively. He is currently a Ph.D. Candidate in the  Departamento de Informatica e Ingenieria de Sistemas at the Universidad de Zaragoza, funded by a FPU national grant (1st rank). His research interests are nonlinear control, distributed systems and multi-robot systems.
\end{IEEEbiography}

\vspace{-6.2cm}

\begin{IEEEbiography}[{\includegraphics[width=1in,height=1.25in,clip,keepaspectratio]{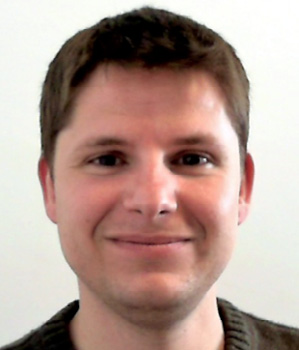}}]{Eduardo Montijano}
(M'12) is an Associate Professor in the Departamento de Inform\'atica e Ingenier\'ia de Sistemas at Universidad de Zaragoza in Spain. He received the M.Sc. and Ph.D. degrees from the Universidad de Zaragoza, Spain, in 2008 and 2012 respectively. He was a faculty member at Centro Universitario de la Defensa, Zaragoza, between 2012 and 2016. His main research interests include distributed algorithms and automatic control in perception problems. His Ph.D. obtained the extraordinary award of the Universidad de Zaragoza in the 2012-2013 academic year.
\end{IEEEbiography}

\vspace{-6.2cm}

\begin{IEEEbiography}[{\includegraphics[width=1in,height=1.25in,clip,keepaspectratio]{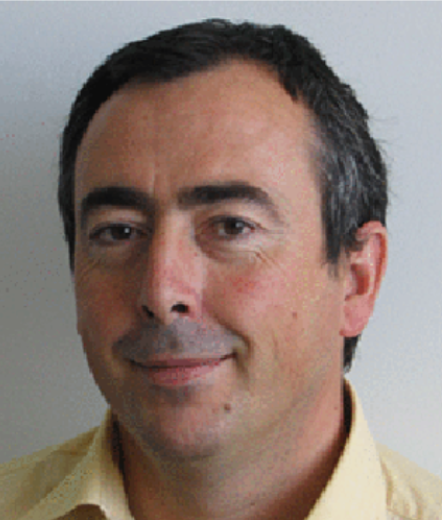}}]{Carlos Sag\"{u}\'{e}s} (Senior Member, IEEE) received the M.Sc. degree in computer science and systems
engineering and the Ph.D. degree in industrial engineering from the University of Zaragoza, Zaragoza, Spain, in 1989 and 1992, respectively.
In 1994, he joined as an Associate Professor with the Departamento de Informatica e Ingenieria de Sistemas, University of Zaragoza, where he became a Full Professor in 2009 and also the Head Teacher. He was engaged in research on force and infrared sensors for robots. His current research interests include control systems and industry applications, computer vision, visual control, and multivehicle cooperative control.
\end{IEEEbiography}

\end{document}